\theoremstyle{plain}
\newtheorem{theorem}{Theorem}
\newtheorem{lemma}{Lemma}
\theoremstyle{definition}
\theoremstyle{remark}
\newtheorem{remark}{Remark}
\newtheorem{assumption}{Assumption}
\title{\LARGE \bf
A Control Theoretical Adaptive Human Pilot Model: Theory and Experimental Validation}
\author{S. S. Tohidi$^{1}$ and Y. Yildiz$^{1}$\\
\thanks{$^{1}$S. S. Tohidi and Y. Yildiz are with Faculty of Mechanical Engineering, Bilkent University, Cankaya, Ankara 06800, Turkey
        {\tt\small \{shahabaldin, yyildiz\}@bilkent.edu.tr}}%
}
\begin{document}

\maketitle
\thispagestyle{empty}
\pagestyle{empty}

\begin{abstract}


This paper proposes an adaptive human pilot model that is able to mimic the crossover model in the presence of uncertainties. The proposed structure is based on the model reference adaptive control, and the adaptive laws are obtained using the Lyapunov-Krasovskii stability criteria. The model can be employed for human-in-the-loop stability and performance analyses incorporating different types of controllers and plant types. For validation purposes, an experimental setup is employed to collect data and a statistical analysis is conducted to measure the predictive power of the pilot model. 
\end{abstract}

\section{INTRODUCTION}

Humans' unique abilities such as adaptive behavior in dynamic environments, and social interaction and moral judgment capabilities, make them essential elements of many control loops. On the other hand, compared to humans, automation provides higher computational performance and multi-tasking capabilities without any fatigue, stress, or boredom \cite{Noth16,Kor15}. Although they have their own individual strengths, humans and automation also demonstrate several weaknesses. Humans may have anxiety, fear and may become unconscious during an operation. Furthermore, in the tasks that require increased attention and focus, humans tend to provide high gain control inputs that can cause undesired oscillations. One example of this phenomenon, for example, is the occurrence of pilot induced oscillations (PIO), where undesired and sustained oscillations are observed due to an abnormal coupling between the aircraft and the pilot \cite{YilKol10, Yil11a,AcoYil14,TohYil18}. Similarly, automation may fail due to an uncertainty, fault or cyber-attack \cite{Li14}. Thus, it is more preferable to design systems where humans and automation work in harmony, complementing each other, resulting in a structure that benefits from the advantages of both.

To achieve a reliable human-automation harmony, a mathematically rigorous human operator model is paramount. A human operator model helps develop safe control systems, and provide a better prediction of human actions and limitations \cite{Hul13,Yuc18,Emre19, Zha19}. 
Quasi-linear model \cite{Mcr57} is one of the first human operator models, which consists of a describing function and a remnant signal accounting for nonlinear behavior. An overview of this model is provided in \cite{Mcr74}. In some applications, where the linear behavior may be dominant, the nonlinear part of this model can be ignored, and the resulting lead-lag-type compensator is used in closed loop stability analysis \cite{Neal71}. The crossover model, proposed in \cite{Mcr63}, is another important human operator model in the aerospace domain. It is motivated from the empirical observations that human pilots adapt their responses in such a way that the overall system dynamics resembles that of a well designed feedback system \cite{Bee08}. A generalized crossover model which mimics human behavior when controlling a fractional order plant is proposed in \cite{Mar17}. In \cite{War16}, crossover model is employed to provide information about the human intent for the controller. In \cite{Gil09}, the dynamics of the operator is represented as a spring-damper-mass system.

Control theoretical operator models drawing from the optimal and adaptive control theories are also proposed by several authors. 
Optimal human models are based on the idea that a well trained human operator behaves in an optimal manner \cite{Wier69, Kle70, Na12, Lon13, Hu19}. On the other hand, adaptive models, such as the ones proposed in \cite{Hess09, Hess15} and \cite{TohYil19Human}, aim to replicate the adaptation capability of humans in uncertain and dynamics environments. In \cite{Hess09} and \cite{Hess15}, adaptation rules are proposed based on expert knowledge. The adaptive model proposed in \cite{Hess09} is applied to change the parameters of the pilot model based on force feedback from a smart inceptor \cite{Xu19}. 
A survey on various pilot models can be found in \cite{Lon14} and \cite{Xu17}.



 Several approaches are also developed for human model parameter identification. In \cite{Zaal11}, a two-step method using wavelets and a windowed maximum likelihood estimation is exploited for the estimation of time-varying pilot model parameters. In \cite{Duar17}, a linear parameter varying model identification framework is incorporated to estimate time-varying human state space representation matrices. Subsystem identification is used in \cite{Zha15} to model human control strategies. In \cite{Van15}, a human operator model for preview tracking tasks is derived from measurement data.


In this paper, we build upon the earlier successful pilot models and propose an adaptive human pilot model that modifies its behavior based on plant uncertainties. This model distinguishes itself from earlier adaptive models by having mathematically derived laws to achieve a cross-over-model-like behavior, instead of employing expert knowledge. This allows a rigorous stability proof, using the Lyapunov-Krasovskii stability criteria, of the overall closed loop system. To validate the model, a setup, including a joystick and a monitor, is used. The participant data collected through this experimental setup is subjected to visual and statistical analyses to evaluate the accuracy of the proposed model. Initial research results of this study were presented in \cite{TohYil19Human}, where the details of the mathematical proof and human experimental validation studies were missing.


This paper is organized as follows. In Section \ref{sec:prob}, the problem statement is given. Obtaining reference model parameters, which determine the properties of the cross-over model, is discussed in Section \ref{sec:reference}. Section \ref{sec:human} presents the human control strategy together with a stability analysis. Experimental set-up, results, and a statistical analysis are provided in Section \ref{sec:experiment}. Finally, a summary is given in Section \ref{sec:conclusion}.

\section{PROBLEM STATEMENT}\label{sec:prob}

According to McRuer's crossover model \cite{Mcr67}, human pilots in the control loop behave in a way that results in an open loop transfer function
\begin{equation}\label{eq:e1x}
Y_{OL}(s)=Y_{h}(s)Y_{p}(s)=\frac{\omega_ce^{-\tau s}}{s},
\end{equation}
near the crossover frequency, $ \omega_c $, where $ Y_{h} $ is the transfer function of the human pilot and $ Y_{p} $ is the transfer function of the plant. $ \tau $ is the effective time delay, including transport delays and high frequency neuromuscular lags.

Consider the following plant dynamics
\begin{equation}\label{eq:e2x}
\dot{x}_p(t)=A_px_p(t)+B_p u_p(t),
\end{equation}
where $ x_p\in \mathbb{R}^{n_p} $ is the plant state vector, $ u_p\in \mathbb{R} $ is the input vector, $ A_p\in \mathbb{R}^{n_p\times n_p} $ is an unknown state matrix and $ B_p\in \mathbb{R}^{n_p} $ is an unknown input matrix.

The human \textit{neuromuscular model} \cite{Mag71, Van04} is represented in state space form as
\begin{equation}\label{eq:e3x}
\begin{aligned}
\dot{x}_h(t)&=A_hx_h(t)+B_hu(t-\tau) \\
y_h(t)&=C_hx_h(t)+D_hu(t-\tau),
\end{aligned} 
\end{equation}
where $ x_h\in \mathbb{R}^{n_h} $ is the neuromuscular state vector, $ A_h\in \mathbb{R}^{n_h\times n_h} $ is the state matrix, $ B_h\in \mathbb{R}^{n_h} $ is the input matrix, $ C_h\in \mathbb{R}^{1\times n_h} $ is the output matrix and $ D_h\in \mathbb{R} $ is the control output matrix. $ u\in \mathbb{R} $ is the neuromuscular input vector, which represents the control decisions taken by the human and fed to the neuromuscular system, $ y_h\in \mathbb{R} $ is the output vector, and $ \tau\in \mathbb{R}^+ $ is a known, constant delay. The neuromuscular model parameters are assumed to be known and the output of the model, $ y_h $, is used as the plant input $ u_p $ in (\ref{eq:e2x}), that is $ y_h=u_p $ (see Fig. \ref{fig:f1-2}). 
\begin{figure}
	\vspace{0.3cm}
	\begin{center}
		\includegraphics[width=9.0cm]{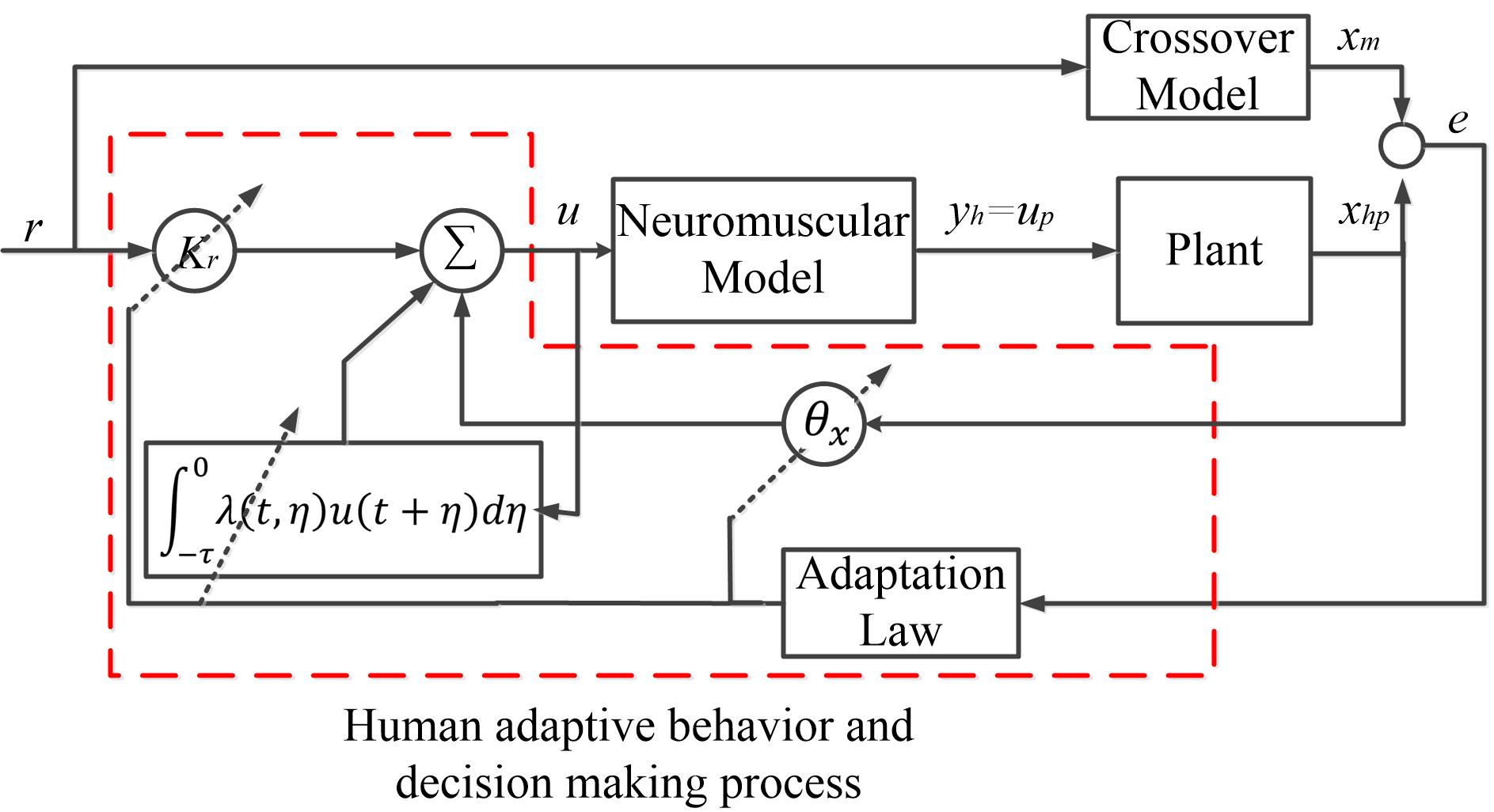}    %
		\caption{The block diagram of the human adaptive behavior and decision making in a closed loop system.} 
		\label{fig:f1-2}
	\end{center}
\end{figure}


By aggregating the human pilot and plant states, we obtain the combined open loop human neuromuscular and plant dynamics as
\begin{equation}\label{eq:e4x}
\begin{aligned}
\underbrace{\begin{bmatrix}
	\dot{x}_h(t) \\ \dot{x}_p(t)
	\end{bmatrix}}_{\dot{x}_{hp}(t)}&=
\underbrace{\begin{bmatrix}
	A_h & 0_{n_h\times n_p} \\ B_p C_h & A_p
	\end{bmatrix}}_{A_{hp}}\underbrace{\begin{bmatrix}
	x_h(t) \\ x_p(t)
	\end{bmatrix}}_{x_{hp(t)}}\\ &+
\underbrace{\begin{bmatrix}
	B_h \\ B_pD_h
	\end{bmatrix}}_{B_{hp}}u(t-\tau), 
\end{aligned} 
\end{equation}
which can be written in the following compact form
\begin{equation}\label{eq:e5x}
\begin{aligned}
\dot{x}_{hp}(t)&=A_{hp}x_{hp}(t)+B_{hp}u(t-\tau),
\end{aligned} 
\end{equation}
where $ x_{hp}=[x_h^T\ x_p^T]^T\in \mathbb{R}^{(n_p+n_h)} $, $ A_{hp}\in \mathbb{R}^{(n_p+n_h)\times (n_p+n_h)} $, $ B_{hp}\in \mathbb{R}^{(n_p+n_h)} $. 
\begin{assumption}
	The pair $ (A_{hp}, B_{hp}) $ is controllable.
\end{assumption}

The goal is to obtain the input $ u(t) $ in (\ref{eq:e3x}), which is the human pilot control decision variable, such that the closed loop system consisting of the adaptive human pilot model and the plant follow the output of a unity feedback reference model with an open loop crossover model transfer function. The closed loop transfer function of the reference model is therefore calculated as
\begin{equation}\label{eq:e21xqx}
\begin{aligned}
G_{cl}(s)=\frac{\frac{\omega_c}{s}e^{-\tau s}}{1+\frac{\omega_c}{s}e^{-\tau s}}=\frac{\omega_c e^{-\tau s}}{s+\omega_c e^{-\tau s}}.
\end{aligned} 
\end{equation}
An approximation of (\ref{eq:e21xqx}) can be given as  
\begin{equation}\label{eq:e21xqqx}
\begin{aligned}
\hat{G}_{cl}(s)=\frac{b_ms^m+b_{m-1}s^{m-1}+...+b_0}{s^n+a_{n-1}s^{n-1}+...+a_0}e^{-\tau s},
\end{aligned} 
\end{equation}
where $ n=n_h+n_p $ and $ m\leq n $ are positive real constants, and $ a_{i} $ and $ b_j $ for $ i=0, ..., n-1 $ and $ j=0, ..., m-1 $, are real constants to be estimated. 
The reference model then can be obtained as the state space representation of (\ref{eq:e21xqqx}) as
\begin{equation}\label{eq:e7x}
\begin{aligned}
\dot{x}_{m}(t)&=A_{m}x_{m}(t)+B_{m}r(t-\tau),\\
\end{aligned} 
\end{equation}
where $ x_m\in \mathbb{R}^{(n_h+n_p)} $ is the reference model state vector, $ A_m\in \mathbb{R}^{(n_h+n_p)\times (n_h+n_p)} $ is the state matrix, $ B_m\in \mathbb{R}^{(n_h+n_p)\times m_h} $ is the input matrix, and $ r\in \mathbb{R}^{m_h} $ is the reference input.

\section{REFERENCE MODEL PARAMETERS}\label{sec:reference}
The crossover transfer function (\ref{eq:e1x}) contains the crossover frequency, $ \omega_c $, which is not known a priori. Experimental data, showing the reference input ($ r(t) $) frequency bandwidth, $ \omega_i $, versus crossover frequency $ \omega_c $, is provided in \cite{Bee08} and \cite{Mcr67}, for plant transfer functions $ K $, $ K/s $ and $ K/s^2 $. We fit polynomials to these experimental results to obtain the crossover frequency of the open loop transfer function given a reference input frequency bandwidth. These polynomials are given in Table I. It is noted that when the reference input has multiple frequency components, the highest frequency is used to calculate the crossover frequency.

\begin{remark}
	In this work, we use the polynomial relationships provided in Table I for zero, first and second order plant dynamics with nonzero poles and zeros. Further experimental work can be conducted to obtain a more precise relationship between the crossover and reference input frequencies, but this is currently out of the scope of this work.
\end{remark}

%
\begin{table}
	\vspace{0.3cm}
	\caption{} 
	\centering
	\begin{tabular}{ | c | c | }
		\hline 
		Plant transfer & Crossover frequency of the \\
		function & open loop transfer function (rad/s)\\
		\hline
		$ K $ & $ \omega_c=0.067\omega_i^2+0.099\omega_i+4.8 $ \\ \hline
		$ K/s $ & $ \omega_c=0.14\omega_i+4.3 $ \\ \hline
		$ K/s^2 $ & $ \omega_c=-0.0031\omega_i^4-0.072\omega_i^3+0.29\omega_i^2$ \\ & $-0.13\omega_i+3 $ \\
		\hline
	\end{tabular}
\end{table}

\section{HUMAN PILOT CONTROL DECISION COMMAND}\label{sec:human}

The adaptive human pilot control decision command, $ u(t) $, is determined as
\begin{equation}\label{eq:e8x}
u(t)=K_rK_xx_{hp}(t+\tau)+K_rr(t)
\end{equation}
where $ K_x\in \mathbb{R}^{1\times (n_h+n_p)} $, and $ K_r\in \mathbb{R}^{m_h\times m_h} $. Using (\ref{eq:e8x}) and (\ref{eq:e5x}), the closed loop dynamics can be obtained as
\begin{equation}\label{eq:e9x}
\begin{aligned}
\dot{x}_{hp}(t)&=(A_{hp}+B_{hp}K_rK_x)x_{hp}(t)+B_{hp}K_rr(t-\tau).
\end{aligned} 
\end{equation} 

Equation (\ref{eq:e8x}) describes a non-causal decision command which requires future values of the states. This problem can be eliminated by solving the differential equation (\ref{eq:e5x}) as a $ \tau $-seconds ahead predictor as
\begin{equation}\label{eq:e9xx}
\begin{aligned}
x_{hp}(t+\tau)&=e^{A_{hp}\tau}x_{hp}(t)+\int_{-\tau}^{0}e^{-A_{hp}\eta}B_{hp}u(t+\eta)d\eta. 
\end{aligned} 
\end{equation}

\begin{assumption}
There exist ideal parameters $ K_r^* $ and $ K_x^* $ satisfying the following matching conditions
\begin{equation}\label{eq:e10x}
\begin{aligned}
&A_{hp}+B_{hp}K_r^*K_x^*=A_m\\
&B_{hp}K_r^*=B_m.
\end{aligned} 
\end{equation}
\end{assumption} 


By substituting (\ref{eq:e9xx}) into (\ref{eq:e8x}), the human pilot control decision input can be written as 
\begin{equation}\label{eq:e11xxxy}
\begin{aligned}
u(t)&=K_rK_xe^{A_{hp}\tau}x_{hp}(t)\\
&+K_rK_x\int_{-\tau}^{0}e^{-A_{hp}\eta}B_{hp}u(t+\eta)d\eta+K_rr(t).
\end{aligned} 
\end{equation}
By defining $ \theta_x(t) $ and $ \lambda(t,\eta) $ as
\begin{equation}\label{eq:e11xxxq}
\begin{aligned}
\theta_x(t)&=K_r(t)K_x(t)e^{A_{hp}\tau},\\
\lambda(t,\eta)&=K_r(t)K_x(t)e^{-A_{hp}\eta}B_{hp},
\end{aligned} 
\end{equation}
(\ref{eq:e11xxxy}) can be rewritten as (see fig. \ref{fig:f1-2})
\begin{equation}\label{eq:e11xxx}
\begin{aligned}
u(t)=\theta_x(t)x_{hp}(t)+\int_{-\tau}^{0}\lambda(t,\eta)u(t+\eta)d\eta+K_r(t)r(t).
\end{aligned} 
\end{equation}
The ideal values of $ \theta_x $ and $ \lambda $ can be obtained as
\begin{equation}\label{eq:e11xxxxx}
\begin{aligned}
\theta_x^*&=K_r^*K_x^*e^{A_{hp}\tau}\\ 
\lambda^*(\eta)&=K_r^*K_x^*e^{-A_{hp}\eta}B_{hp}.
\end{aligned} 
\end{equation}
Since $ A_{hp} $ and $ B_{hp} $ are unknown, $ \theta_x $ and $ \lambda $ need to be estimated. The closed loop dynamics can be obtained using (\ref{eq:e5x}) and (\ref{eq:e11xxx}) as
\begin{equation}\label{eq:e11xxxx}
\begin{aligned}
\dot{x}_{hp}(t)&=A_{hp}x_{hp}(t)+B_{hp}\theta_x(t-\tau)x_{hp}(t-\tau)\\ &+\int_{-\tau}^{0}B_{hp}\lambda(t-\tau,\eta)u(t+\eta-\tau)d\eta\\ &+B_{hp}K_rr(t-\tau),
\end{aligned} 
\end{equation}

Defining the deviations of the adaptive parameters from their ideal values as $ \tilde{\theta}_x=\theta_x-\theta_x^* $ and $ \tilde{\lambda}=\lambda-\lambda^* $, and adding and subtracting $ A_mx_{hp}(t) $ to (\ref{eq:e11xxxx}), and using (\ref{eq:e10x}), we obtain that
\begin{equation}\label{eq:e12xxz}
\begin{aligned}
\dot{x}_{hp}(t)&=A_mx_{hp}(t)-B_{hp}K_r^*K_x^*x_{hp}(t)\\ &+B_{hp}K_r(t-\tau)K_x(t-\tau)\Big(e^{A_{hp}\tau}x_{hp}(t-\tau)\\ &+\int_{-\tau}^{0}e^{-A_{hp}\eta}B_{hp}u(t+\eta-\tau)d\eta \Big)\\ &+B_{hp}K_r(t-\tau)r(t-\tau).\\
\end{aligned} 
\end{equation}
 Using (\ref{eq:e9xx}), (\ref{eq:e12xxz}) is rewritten as
\begin{equation}\label{eq:e12xx}
\begin{aligned}
\dot{x}_{hp}(t)&=A_mx_{hp}(t)-B_{hp}K_r^*K_x^*x_{hp}(t)\\ &+B_{hp}K_r(t-\tau)K_x(t-\tau)x_{hp}(t)\\ &+B_{hp}K_r(t-\tau)r(t-\tau).
\end{aligned} 
\end{equation}
Defining the tracking error as $ e(t)=x_{hp}-x_{m} $, and subtracting (\ref{eq:e7x}) from (\ref{eq:e12xx}), and using (\ref{eq:e10x}), and following a similar procedure given in \cite{NarAnn12}, it is obtained that
\begin{equation}\label{eq:e12xxxz}
\begin{aligned}
\dot{e}(t)&=\dot{x}_{hp}-\dot{x}_{m}\\
&=A_me(t)-B_{hp}K_r^*K_x^*x_{hp}(t)\\ &+B_{hp}K_r(t-\tau)K_x(t-\tau)x_{hp}(t)\\ &+B_{hp}(K_r(t-\tau)-K_r^*)r(t-\tau)\\
&=A_me(t)+\big( -B_{hp}K_r^*K_x^*\\
&+B_{hp}(K_r^*-K_r^*+K_r(t-\tau))K_x(t-\tau)\big)x_{hp}(t) 
\\&+B_{hp}(K_r(t-\tau)-K_r^*)r(t-\tau)\\
&=A_me(t)+B_{m}(K_x(t-\tau)-K_x^*)x_{hp}(t)\\
&+B_{m}({K_r^*}^{-1}K_r(t-\tau)-1)K_x(t-\tau)x_{hp}(t)\\
&+B_{m}({K_r^*}^{-1}K_r(t-\tau)-1)r(t-\tau)\\
&=A_me(t)+B_{m}(\tilde{K}_x(t-\tau)x_{hp}(t)\\
&\hspace{-0.1cm}+B_{m}({K_r^*}^{-1}-K_r^{-1}(t-\tau))K_r(t-\tau)K_x(t-\tau)x_{hp}(t)\\
&+B_{m}({K_r^*}^{-1}-K_r^{-1}(t-\tau))K_r(t-\tau)r(t-\tau).
\end{aligned} 
\end{equation}
Using (\ref{eq:e9xx}) and defining $ \Phi={K_r^*}^{-1}-K_r^{-1} $, we can rewrite (\ref{eq:e12xxxz}) as
\begin{equation}\label{eq:e12xxxzz}
\begin{aligned}
\dot{e}(t)&=A_me(t)+{B_mK_r^*}^{-1}(K_r^*K_x(t-\tau)-K_r^*K_x^*)\\
&\hspace{-0.1cm}\times\Big( e^{A_{hp}\tau}x_{hp}(t-\tau)+\int_{-\tau}^{0}e^{-A_{hp}\eta}B_{hp}u(t+\eta-\tau)d\eta \Big)\\
&+B_m\Phi(t-\tau)\Big( K_r(t-\tau)K_x(t-\tau)\Big( e^{A_{hp}\tau}x_{hp}(t-\tau)\\
&+\int_{-\tau}^{0}e^{-A_{hp}\eta}B_{hp}u(t+\eta-\tau)d\eta \Big)\\
&+K_r(t-\tau)r(t-\tau) \Big).
\end{aligned} 
\end{equation}
Using (\ref{eq:e11xxxxx}) and (\ref{eq:e12xxxzz}), we obtain that
\begin{equation}\label{eq:e12xxxzzz}
\begin{aligned}
\dot{e}(t)&=A_me(t)+B_mK_x(t-\tau)\Big( e^{A_{hp}\tau}x_{hp}(t-\tau)\\
&+\int_{-\tau}^{0}e^{-A_{hp}\eta}B_{hp}u(t+\eta-\tau)d\eta \Big)\\
&-B_m{K_r^*}^{-1}\Big( \theta_x^*x_{hp}(t-\tau)\\
&+\int_{-\tau}^{0}\lambda^*(\eta)u(t+\eta-\tau)d\eta \Big)\\
&+B_{m}\Phi(t-\tau)u(t-\tau).\\
\end{aligned} 
\end{equation}
Using (\ref{eq:e11xxxq}), (\ref{eq:e12xxxzzz}) can be rewritten as
\begin{equation}\label{eq:e12xxxq}
\begin{aligned}
\dot{e}(t)&=A_me(t)+B_m\Big( \big( K_r^{-1}(t-\tau)\theta_x(t-\tau)-{K_r^*}^{-1}\theta_x^*\big)\\
&\times x_{hp}(t-\tau)+\int_{-\tau}^{0}\big( K_r^{-1}(t-\tau)\lambda(t-\tau,\eta)\\
&-{K_r^*}^{-1}\lambda^*(\eta) \big) u(t+\eta-\tau)d\eta \Big)\\
&+B_{m}\Phi(t-\tau)u(t-\tau).
\end{aligned} 
\end{equation}
Defining $ \theta_1=K_r^{-1}\theta_x $ and $ \lambda_1=K_r^{-1}\lambda $, and using their deviations from their ideal values, $ \tilde{\theta}_1=\theta_1-\theta_1^* $ and $ \tilde{\lambda}_1=\lambda_1-\lambda_1^* $, where 
$ \theta_1^*={K_r^*}^{-1}\theta_x^* $ and $ \lambda_1^*={K_r^*}^{-1}\lambda^* $, (\ref{eq:e12xxxq}) can be rewritten as
\begin{equation}\label{eq:e12xxx}
\begin{aligned}
\dot{e}(t)&=A_me(t)+B_{m}\tilde{\theta}_1(t-\tau)x_{hp}(t-\tau)\\ &+B_m\int_{-\tau}^{0}\tilde{\lambda}_1(t-\tau,\eta)u(t+\eta-\tau)d\eta \\ 
&+B_{m}\Phi(t-\tau)u(t-\tau).
\end{aligned} 
\end{equation}
The following lemma will be necessary to prove the main theorem of this article.

\begin{lemma}\label{lem1}
Suppose that the continuous function $ u(t) $ is given as
\begin{equation}\label{eq:e21xx}
\begin{aligned}
u(t)=f(t)+\int_{-\tau}^{0}\lambda(t,\eta)u(t+\eta)d\eta,
\end{aligned} 
\end{equation}
where $ u,f: [t_0-\tau,\infty]\rightarrow R $, and $ \lambda:[t_0, \infty)\times [-\tau, 0]\rightarrow R $. Then
\begin{equation}\label{eq:e22xx}
\begin{aligned}
|u(t)|\leq 2(\bar{f}+c_0c_1)e^{c_0^2(t-t')},\ \forall t_j'\geq t_i',
\end{aligned} 
\end{equation}
if constants $ t_i',\bar{f}, c_0, c_1\in R^+ $ exist such that $ |f(t)|\leq \bar{f} $, 
\begin{equation}\label{eq:e23xx}
\begin{aligned}
\int_{-\tau}^{0}\lambda^2(t,\eta)d\eta\leq c_0^2 \ \ for\  t\in [t_i',t_j'),
\end{aligned} 
\end{equation}
and
\begin{equation}\label{eq:e24xx}
\begin{aligned}
\int_{-\tau}^{0}u^2(t+\eta)d\eta\leq c_1^2 \ \ \forall
t\leq t_i'.
\end{aligned} 
\end{equation}
\end{lemma}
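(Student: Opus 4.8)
The plan is to read (\ref{eq:e21xx}) as a renewal-type integral inequality and convert it into a scalar differential inequality for the sliding-window energy $V(t):=\int_{-\tau}^{0}u^{2}(t+\eta)\,d\eta=\int_{t-\tau}^{t}u^{2}(s)\,ds$, so that (\ref{eq:e22xx}) is established for $t\in[t_i',t_j')$ with $t'=t_i'$. First, for $t\in[t_i',t_j')$ I apply the Cauchy--Schwarz inequality to the memory term in (\ref{eq:e21xx}) and invoke (\ref{eq:e23xx}): $\left|\int_{-\tau}^{0}\lambda(t,\eta)u(t+\eta)\,d\eta\right|\le\left(\int_{-\tau}^{0}\lambda^{2}(t,\eta)\,d\eta\right)^{1/2}\!\left(\int_{-\tau}^{0}u^{2}(t+\eta)\,d\eta\right)^{1/2}\le c_0\,V(t)^{1/2}$; combining this with $|f(t)|\le\bar f$ and the triangle inequality yields the pointwise bound $|u(t)|\le\bar f+c_0\,V(t)^{1/2}$ on $[t_i',t_j')$. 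Since $u$ is continuous, $V$ is continuously differentiable with $\dot V(t)=u^{2}(t)-u^{2}(t-\tau)\le u^{2}(t)$, so squaring the pointwise bound and using $(a+b)^{2}\le 2a^{2}+2b^{2}$ produces the differential inequality $\dot V(t)\le 2\bar f^{2}+2c_0^{2}V(t)$ on $[t_i',t_j')$, with initial datum $V(t_i')\le c_1^{2}$ supplied by (\ref{eq:e24xx}) evaluated at $t=t_i'$.

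Next I apply Gr\"onwall's inequality (the comparison lemma) on $[t_i',t_j')$ to obtain $V(t)\le\bigl(c_1^{2}+\bar f^{2}/c_0^{2}\bigr)e^{2c_0^{2}(t-t_i')}$. Taking square roots and using $\sqrt{a+b}\le\sqrt a+\sqrt b$ gives $V(t)^{1/2}\le(c_1+\bar f/c_0)\,e^{c_0^{2}(t-t_i')}$, hence $c_0\,V(t)^{1/2}\le(c_0c_1+\bar f)\,e^{c_0^{2}(t-t_i')}$. Substituting this back into $|u(t)|\le\bar f+c_0\,V(t)^{1/2}$ and absorbing the free additive $\bar f$ via $e^{c_0^{2}(t-t_i')}\ge 1$ gives $|u(t)|\le(2\bar f+c_0c_1)e^{c_0^{2}(t-t_i')}\le 2(\bar f+c_0c_1)e^{c_0^{2}(t-t_i')}$, which is exactly (\ref{eq:e22xx}).

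I expect the only real subtlety to be the constant/rate bookkeeping, not anything structural: Gr\"onwall naturally produces the growth rate $2c_0^{2}$ for the window energy $V$, and it is precisely the final square root that halves it to the advertised rate $c_0^{2}$ for $|u|$ and turns $c_1^{2}+\bar f^{2}/c_0^{2}$ into the clean coefficient $2(\bar f+c_0c_1)$. Two points require attention: the hypothesis (\ref{eq:e23xx}) on $\lambda$, and therefore the differential inequality, hold only on $[t_i',t_j')$, so the Gr\"onwall argument must be run on exactly that interval with $V(t_i')\le c_1^{2}$ inherited from the region $t\le t_i'$ through (\ref{eq:e24xx}); and the degenerate case $c_0=0$ (if admitted by $c_0\in R^{+}$) is immediate, since then the memory term vanishes and $|u(t)|\le\bar f\le 2(\bar f+c_0c_1)e^{c_0^{2}(t-t_i')}$.
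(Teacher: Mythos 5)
Your argument is correct, and the constants work out exactly: Cauchy--Schwarz with (\ref{eq:e23xx}) gives $|u(t)|\le \bar f + c_0 V(t)^{1/2}$ for the window energy $V(t)=\int_{t-\tau}^{t}u^2(s)\,ds$, the comparison $\dot V\le 2\bar f^2+2c_0^2V$ with $V(t_i')\le c_1^2$ from (\ref{eq:e24xx}) plus Gr\"onwall yields $V(t)\le (c_1^2+\bar f^2/c_0^2)e^{2c_0^2(t-t_i')}$, and the square root halves the rate to the advertised $e^{c_0^2(t-t_i')}$ with coefficient $2\bar f+c_0c_1\le 2(\bar f+c_0c_1)$. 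The paper itself supplies no proof---it only cites an external reference---and your derivation is the standard self-contained argument for this kind of delay-integral bound, reproducing precisely the form of (\ref{eq:e22xx}) with $t'=t_i'$ on $[t_i',t_j')$.
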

\begin{proof}
The proof of Lemma \ref{lem1} can be found in \cite{YilAnn10}.
%
\end{proof}


\begin{theorem}\label{thm1}
Given the initial conditions $ \tilde{\theta}_1(\xi)$, $ \tilde{\lambda}_1(\xi,\eta) $, $ \Phi(\xi) $ and $ x_{hp}(\xi) $ for $ \xi\in [-\tau, 0] $, and $ u(\zeta) $ for $ \zeta\in[-2\tau,0] $, there exists a $ \tau^* $ such that for all $ \tau\in[0,\tau^*] $, the controller (\ref{eq:e11xxx}) with the adaptive laws
\begin{equation}\label{eq:14thm}
\dot{{\theta}}_1^T(t)=-x_{hp}(t-\tau)e(t)^TPB_{m},
\end{equation} 
\begin{equation}\label{eq:15thm}
\dot{{\Phi}}^T(t)=-u(t-\tau)e(t)^TPB_{m},\\
\end{equation} 
\begin{equation}\label{eq:16thm}
\dot{\lambda}_1^T(t,\eta)=-u(t+\eta-\tau)e(t)^TPB_{m},
\end{equation} 
where $ P $ is the symmetric positive definite matrix satisfying
the Lyapunov equation $ A_m^TP +P A_m = -Q $ for a symmetric
positive definite matrix $ Q $, which can be employed to obtain controller parameters using $ \dot{K}_r=\text{Proj}(K_r\dot{\Phi}K_r) $, $\theta_x(t)=K_r(t)\theta_1(t) $ and $ \lambda(t)=K_r(t)\lambda_1(t) $, make the pilot neuromuscular and plant aggregate system (\ref{eq:e5x}) follow the crossover reference model (\ref{eq:e7x}) asymptotically, i.e, $ lim_{t\to \infty} x_{hp}(t)=x_m(t) $, while keeping all the signals bounded.
\end{theorem}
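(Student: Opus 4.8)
The plan is a Lyapunov--Krasovskii argument on the error dynamics~(\ref{eq:e12xxx}): build a functional from the tracking error and the parameter errors, show it is non-increasing for sufficiently small $\tau$, deduce boundedness of all signals, and conclude $e(t)\to 0$ via Barbalat's lemma. Lemma~\ref{lem1} is the ingredient that rules out finite escape time, so that the Lyapunov estimate can be propagated over all of $[0,\infty)$. Concretely, I would start from
\[
V(t)=e^T(t)Pe(t)+\text{tr}\big(\tilde{\theta}_1^T(t)\tilde{\theta}_1(t)\big)+\text{tr}\big(\Phi^T(t)\Phi(t)\big)+\int_{-\tau}^{0}\text{tr}\big(\tilde{\lambda}_1^T(t,\eta)\tilde{\lambda}_1(t,\eta)\big)\,d\eta,
\]
augmented by Krasovskii terms of the form $\int_{t-\tau}^{t}(\cdot)\,ds$ and $\int_{-\tau}^{0}\!\int_{t+\sigma}^{t}(\cdot)\,ds\,d\sigma$ in $\|x_{hp}\|^2$ and $\|u\|^2$, whose weights are fixed later to absorb the delay-induced cross terms.

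Differentiating along~(\ref{eq:e12xxx}) gives $\frac{d}{dt}(e^TPe)=-e^TQe+2e^TPB_m\big[\tilde{\theta}_1(t-\tau)x_{hp}(t-\tau)+\int_{-\tau}^{0}\tilde{\lambda}_1(t-\tau,\eta)u(t+\eta-\tau)\,d\eta+\Phi(t-\tau)u(t-\tau)\big]$, while the adaptive laws~(\ref{eq:14thm})--(\ref{eq:16thm}) make the derivatives of the parameter-error quadratics equal to $-2e^TPB_m\big[\tilde{\theta}_1(t)x_{hp}(t-\tau)+\int_{-\tau}^{0}\tilde{\lambda}_1(t,\eta)u(t+\eta-\tau)\,d\eta+\Phi(t)u(t-\tau)\big]$. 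These cancel exactly apart from the time-shift mismatches $\tilde{\theta}_1(t-\tau)-\tilde{\theta}_1(t)$, $\tilde{\lambda}_1(t-\tau,\eta)-\tilde{\lambda}_1(t,\eta)$ and $\Phi(t-\tau)-\Phi(t)$. At this point I also fold in the projection: since $K_r^*$ lies in the projection set, $\text{Proj}$ can only make $\text{tr}(\Phi^T\dot{\Phi})$ more negative, so it does not spoil the estimate, and it keeps $K_r$ bounded and invertible so that $\theta_x=K_r\theta_1$, $\lambda=K_r\lambda_1$ and $\Phi=K_r^{*-1}-K_r^{-1}$ are well defined and bounded a priori.

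Next I would control the mismatch for small $\tau$. Using~(\ref{eq:14thm}), $\tilde{\theta}_1(t-\tau)-\tilde{\theta}_1(t)=\int_{t-\tau}^{t}x_{hp}(s-\tau)e^T(s)PB_m\,ds$, and similarly for the other two, so each mismatch term is $O(\tau)$ times integrals of products of $e$, $x_{hp}$ and $u$ over $[t-\tau,t]$. Applying Young's inequality and using the Krasovskii terms from the functional, together with $x_{hp}=e+x_m$ ($x_m$ bounded because $A_m$ is Hurwitz and $r$ is bounded) and the bound on $u$ obtained by applying Lemma~\ref{lem1} to~(\ref{eq:e11xxx}) (which has the form~(\ref{eq:e21xx}) with $f=\theta_x x_{hp}+K_r r$ and $\int_{-\tau}^{0}\lambda^2\,d\eta$ bounded through projection), one arrives at $\dot{V}(t)\le-e^T(t)Qe(t)+\tau\,\kappa\,\Xi(t)$, where $\Xi$ is a sum of nonnegative quantities dominated by $V$ and by bounded exogenous signals. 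Choosing $\tau^*$ small enough that $\tau\kappa$ is absorbed into the $-e^TQe$ term and the negative Krasovskii contributions yields $\dot{V}(t)\le-c\|e(t)\|^2\le 0$ for every $\tau\in[0,\tau^*]$.

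From $\dot{V}\le 0$ we get $V\in\mathcal{L}_\infty$, hence $e,\tilde{\theta}_1,\Phi,\tilde{\lambda}_1\in\mathcal{L}_\infty$; then $x_{hp}=e+x_m\in\mathcal{L}_\infty$, $K_r$ and $K_r^{-1}$ are bounded by projection, $\theta_x,\lambda\in\mathcal{L}_\infty$, and Lemma~\ref{lem1} gives $u\in\mathcal{L}_\infty$, so $\dot{e}\in\mathcal{L}_\infty$. Integrating $\dot{V}\le-c\|e\|^2$ shows $e\in\mathcal{L}_2$, and with $\dot{e}\in\mathcal{L}_\infty$, Barbalat's lemma gives $e(t)\to 0$, i.e.\ $x_{hp}(t)\to x_m(t)$. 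The main obstacle is the delay-mismatch step: the threshold $\tau^*$ must come out of constants independent of the particular trajectory, so one has to first invoke Lemma~\ref{lem1} and the projection to guarantee that all signals exist and stay finite on every bounded interval, and only then close the Lyapunov inequality globally; arranging the Krasovskii weights and the Young's-inequality splits so that the residual is genuinely $O(\tau)$ times already-controlled quantities is the delicate bookkeeping.
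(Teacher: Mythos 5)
Your overall architecture matches the paper's: a Lyapunov--Krasovskii functional in $e$, $\tilde{\theta}_1$, $\Phi$, $\tilde{\lambda}_1$ augmented with double-integral Krasovskii terms, exact cancellation up to the time-shift mismatches $\tilde{\theta}_1(t-\tau)-\tilde{\theta}_1(t)$ (etc.) handled via $g(t)-g(t-\tau)=\int_{-\tau}^{0}\dot{g}(t+v)\,dv$, projection to keep $K_r$ and $K_r^{-1}$ bounded, Lemma~\ref{lem1} to bound $u$ from the implicit integral equation (\ref{eq:e11xxx}), and Barbalat at the end. (One cosmetic difference: the paper's Krasovskii terms are quadratic in $\dot{\tilde{\theta}}_1,\dot{\Phi},\dot{\tilde{\lambda}}_1$ rather than in $\|x_{hp}\|^2,\|u\|^2$.)

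The genuine gap is in the step where you claim $\dot{V}\le -e^TQe+\tau\kappa\,\Xi(t)$ with $\Xi$ ``dominated by $V$ and by bounded exogenous signals,'' so that a small $\tau^*$ absorbs it. After the trace inequalities the residual is
\begin{equation}
2\tau\,\|B_m^TP\|_F^2\,\|e(t)\|^2\Big(\|x_{hp}(t-\tau)\|^2+\|u(t-\tau)\|^2+\int_{-\tau}^{0}\|u(t+\eta-\tau)\|^2d\eta\Big),
\end{equation}
which is a \emph{product} of $\|e\|^2$ with state and input norms; it is quartic in the signals, $u$ does not appear in $V$ at all, and no trajectory-independent choice of Krasovskii weights or of $\tau^*$ absorbs it into $-e^TQe$ a priori. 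Your fallback --- first rule out finite escape on bounded intervals, then ``close the inequality globally'' --- is circular: you need boundedness of $x_{hp}$ and $u$ to get $\dot{V}\le 0$, and $\dot{V}\le 0$ to get boundedness, and Lemma~\ref{lem1} by itself only gives a bound growing like $e^{c_0^2(t-t')}$. The paper breaks the circle by exploiting that the residual involves only signals delayed by at least $\tau$: it runs an induction over intervals $[t_0+k\tau,\,t_0+(k+1)\tau]$, where bounds already established on the previous interval (on $x_{hp}$ from $V(t)\le V(t_0)$, on $K_r,K_r^{-1}$ from projection, on $u$ from Lemma~\ref{lem1} applied over that single interval) verify the smallness condition (\ref{eq:e20x}) on the current one, and the resulting \emph{uniform} bounds $I_0$ and $U(I_0)$ (depending only on $V(t_0)$ and system data) fix a single $\tau^*$ via (\ref{eq:step2}) and (\ref{eq:step4}) valid for every $k$. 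Without this bootstrapping mechanism --- or an equivalent one --- your argument does not establish $\dot{V}\le 0$ beyond the first interval, so the existence of $\tau^*$ is not proved.
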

\begin{proof}
Consider a Lyapunov-Krasovskii functional \cite{YilAnn10}
\begin{equation}\label{eq:e12x}
\begin{aligned}
V(t)&=e^TPe+\text{tr}({\Phi}^T(t){\Phi}(t))+\text{tr}(\tilde{\theta}_1^T(t)\tilde{\theta}_1(t))\\ &+\int_{-\tau}^{0}\int_{t+v}^{t}\text{tr}(\dot{\tilde{\theta}}_1^T(\xi)\dot{\tilde{\theta}}_1(\xi))d\xi dv\\
&+\int_{-\tau}^{0}\int_{t+v}^{t}\text{tr}(\dot{{\Phi}}^T(\xi)\dot{{\Phi}}(\xi))d\xi dv\\
&+\int_{-\tau}^{0}\text{tr}(\tilde{\lambda}_1^T(t,\eta)\tilde{\lambda}_1(t,\eta))d\eta\\ &+\int_{-\tau}^{0}\int_{t+v}^{t}\int_{-\tau}^{0}\text{tr}(\dot{\tilde{\lambda}}_1^T(\xi,\eta)\dot{\tilde{\lambda}}_1(\xi,\eta))d\eta d\xi dv.
\end{aligned} 
\end{equation}
The derivative of $ V(t) $ can be calculated as
\begin{equation}\label{eq:e14x}
\begin{aligned}
\dot{V}(t)&=\dot{e}^T(t)^TPe(t)+e^T(t)P\dot{e}(t)+2\text{tr}(\dot{\tilde{\theta}}_1^T(t)\tilde{\theta}_1(t))\\ &+2\text{tr}(\dot{{\Phi}}^T(t){\Phi}(t))+\int_{-\tau}^{0}2\text{tr}(\dot{\tilde{\lambda}}_1^T(t,\eta)\tilde{\lambda}_1(t,\eta))d\eta\\ 
&+\tau \text{tr}(\dot{\tilde{\theta}}_1^T(t)\dot{\tilde{\theta}}_1(t))-\int_{-\tau}^{0}\text{tr}(\dot{\tilde{\theta}}_1^T(t+v)\dot{\tilde{\theta}}_1(t+v))dv\\ 
&+\tau\text{tr}(\dot{{\Phi}}^T(t)\dot{{\Phi}}(t))-\int_{-\tau}^{0}\text{tr}(\dot{{\Phi}}^T(t+v)\dot{{\Phi}}(t+v))dv\\
&+\tau\int_{-\tau}^{0}\text{tr}(\dot{\tilde{\lambda}}_1^T(t,\eta)\dot{\tilde{\lambda}}_1(t,\eta))d\eta\\
&-\int_{-\tau}^{0}\int_{-\tau}^{0}\text{tr}(\dot{\tilde{\lambda}}_1^T(t+v,\eta)\dot{\tilde{\lambda}}_1(t+v,\eta))d\eta dv.
\end{aligned} 
\end{equation}
Substituting (\ref{eq:e12xxx}) into (\ref{eq:e14x}) and using the Lyapunov equation $ A_m^TP +P A_m = -Q $, it is obtained that
\begin{equation}
\begin{aligned}
\dot{V}(t)&=-{e}^T(t)Qe(t)+2e^T(t)PB_m\tilde{\theta}_1(t-\tau)x_{hp}(t-\tau)\\
&+2e^T(t)PB_m\int_{-\tau}^{0}\tilde{\lambda}_1(t-\tau,\eta)u(t+\eta-\tau)d\eta\\
&+2e^T(t)PB_m\Phi(t-\tau)u(t-\tau)\\
&+2\text{tr}(\dot{\tilde{\theta}}_1^T(t)\tilde{\theta}_1(t))+2\text{tr}(\dot{{\Phi}}^T(t){\Phi}(t)) \\
&+\int_{-\tau}^{0}2\text{tr}(\dot{\tilde{\lambda}}_1^T(t,\eta)\tilde{\lambda}_1(t,\eta))d\eta\\ 
&+\tau \text{tr}(\dot{\tilde{\theta}}_1^T(t)\dot{\tilde{\theta}}_1(t))-\int_{-\tau}^{0}\text{tr}(\dot{\tilde{\theta}}_1^T(t+v)\dot{\tilde{\theta}}_1(t+v))dv\\ 
&+\tau\text{tr}(\dot{{\Phi}}^T(t)\dot{{\Phi}}(t))-\int_{-\tau}^{0}\text{tr}(\dot{{\Phi}}^T(t+v)\dot{{\Phi}}(t+v))dv\notag
\end{aligned}
\end{equation}
\begin{equation}\label{eq:e14xx}
\begin{aligned}
&+\tau\int_{-\tau}^{0}\text{tr}(\dot{\tilde{\lambda}}_1^T(t,\eta)\dot{\tilde{\lambda}}_1(t,\eta))d\eta\\
&-\int_{-\tau}^{0}\int_{-\tau}^{0}\text{tr}(\dot{\tilde{\lambda}}_1^T(t+v,\eta)\dot{\tilde{\lambda}}_1(t+v,\eta))d\eta dv.
\end{aligned} 
\end{equation}
Using $ g(t)-g(t-\tau)=\int_{-\tau}^{0}\dot{g}(t+v)dv $, (\ref{eq:e14xx}) can be rewritten as
\begin{equation}\label{eq:e14xxxx}
\begin{aligned}
\dot{V}(t)&=-{e}^T(t)Qe(t)\\
&+2\text{tr}\Big(x_{hp}(t-\tau)e^T(t)PB_m\tilde{\theta}_1(t)+\dot{\tilde{\theta}}_1^T(t)\tilde{\theta}_1(t)\Big)\\
&+2\text{tr}\Big(u(t-\tau)e^T(t)PB_m\Phi(t)+\dot{{\Phi}}^T(t){\Phi}(t)\Big)\\
&+\int_{-\tau}^{0}2\text{tr}\Big(u(t+\eta-\tau)e^T(t)PB_m\tilde{\lambda}_1(t,\eta)\\ &+\dot{\tilde{\lambda}}_1^T(t,\eta)\tilde{\lambda}_1(t,\eta)\Big)d\eta\\
&-2e^T(t)PB_m(\int_{-\tau}^{0}\dot{\tilde{\theta}}_1(t+v)dv)x_{hp}(t-\tau)\\
&-2e^T(t)PB_m(\int_{-\tau}^{0}\dot{{\Phi}}(t+v)dv)u(t-\tau)\\
&-2e^T(t)PB_m\Big(\int_{-\tau}^{0}(\int_{-\tau}^{0}\dot{\tilde{\lambda}}_1(t+v,\eta)dv)\\
&\times u(t+\eta-\tau)d\eta\Big)\\
&+\tau \text{tr}(\dot{\tilde{\theta}}_1^T(t)\dot{\tilde{\theta}}_1(t))-\int_{-\tau}^{0}\text{tr}(\dot{\tilde{\theta}}_1^T(t+v)\dot{\tilde{\theta}}_1(t+v))dv\\ 
&+\tau\text{tr}(\dot{{\Phi}}^T(t)\dot{{\Phi}}(t))-\int_{-\tau}^{0}\text{tr}(\dot{{\Phi}}^T(t+v)\dot{{\Phi}}(t+v))dv\\
&+\tau\int_{-\tau}^{0}\text{tr}(\dot{\tilde{\lambda}}_1^T(t,\eta)\dot{\tilde{\lambda}}_1(t,\eta))d\eta\\
&-\int_{-\tau}^{0}\int_{-\tau}^{0}\text{tr}(\dot{\tilde{\lambda}}_1^T(t+v,\eta)\dot{\tilde{\lambda}}_1(t+v,\eta))d\eta dv.
\end{aligned} 
\end{equation}
By substituting (\ref{eq:14thm})-(\ref{eq:16thm}) into (\ref{eq:e14xxxx}), it is obtained that
\begin{equation}
\begin{aligned}
\dot{V}(t)&=-{e}^T(t)Qe(t)\\
&-2\int_{-\tau}^{0}\text{tr}(x_{hp}(t-\tau)e(t)^TPB_{m}\dot{\tilde{\theta}}_1(t+v))dv\\ &-2\int_{-\tau}^{0}\text{tr}(u(t-\tau)e(t)^TPB_{m}\dot{{\Phi}}(t+v))dv\\
&-2\int_{-\tau}^{0}\int_{-\tau}^{0}\text{tr}(u(t+\eta-\tau)e(t)^TPB_{m}\dot{\tilde{\lambda}}_1(t+v,\eta))dvd\eta\\
&+\tau \text{tr}(\dot{\tilde{\theta}}_1^T(t)\dot{\tilde{\theta}}_1(t))-\int_{-\tau}^{0}\text{tr}(\dot{\tilde{\theta}}_1^T(t+v)\dot{\tilde{\theta}}_1(t+v))dv\\ 
&+\tau\text{tr}(\dot{{\Phi}}^T(t)\dot{{\Phi}}(t))-\int_{-\tau}^{0}\text{tr}(\dot{{\Phi}}^T(t+v)\dot{{\Phi}}(t+v))dv\\
&+\tau\int_{-\tau}^{0}\text{tr}(\dot{\tilde{\lambda}}_1^T(t,\eta)\dot{\tilde{\lambda}}_1(t,\eta))d\eta\\
&-\int_{-\tau}^{0}\int_{-\tau}^{0}\text{tr}(\dot{\tilde{\lambda}}_1^T(t+v,\eta)\dot{\tilde{\lambda}}_1(t+v,\eta))d\eta dv\notag 
\end{aligned} 
\end{equation}
\begin{equation}\label{eq:e17x}
\begin{aligned}
&=-{e}^T(t)Qe(t)+\int_{-\tau}^{0}\text{tr}\Big(2\dot{\tilde{\theta}}_1^T(t)\dot{\tilde{\theta}}_1(t+v)\\
&+\dot{\tilde{\theta}}_1^T(t)\dot{\tilde{\theta}}_1(t) -\dot{\tilde{\theta}}_1^T(t+v)\dot{\tilde{\theta}}_1(t+v)\Big)dv\\
&+\int_{-\tau}^{0}\text{tr}\Big(2\dot{{\Phi}}^T(t)\dot{{\Phi}}(t+v)+\dot{{\Phi}}^T(t)\dot{{\Phi}}(t)\\
&-\dot{{\Phi}}^T(t+v)\dot{{\Phi}}(t+v)\Big)dv\\
&+\int_{-\tau}^{0}\int_{-\tau}^{0}\text{tr}\Big(2\dot{\tilde{\lambda}}_1^T(t,\eta)\dot{\tilde{\lambda}}_1(t+v,\eta)\\
&+\dot{\tilde{\lambda}}_1^T(t,\eta)\dot{\tilde{\lambda}}_1(t,\eta)-\dot{\tilde{\lambda}}_1^T(t+v,\eta)\dot{\tilde{\lambda}}_1(t+v,\eta)\Big)d\eta dv.
\end{aligned} 
\end{equation}
Using the trace property $ \text{tr}(A+B)=\text{tr}(A)+\text{tr}(B) $, and the algebraic inequality $ a^2\geq 2ab-b^2 $ for two scalars $ a $ and $ b $, it can be shown that $ \text{tr}(2A^TB+A^TA-B^TB)\leq 2\text{tr}(A^TA) $. Using these inequalities, (\ref{eq:e17x}) can be rewritten as
\begin{equation}\label{eq:e17xx}
\begin{aligned}
\dot{V}(t)&\leq-{e}^T(t)Qe(t)+\int_{-\tau}^{0}2\text{tr}(\dot{\tilde{\theta}}_1^T(t)\dot{\tilde{\theta}}_1(t))dv\\
&+\int_{-\tau}^{0}2\text{tr}(\dot{{\Phi}}^T(t)\dot{{\Phi}}(t))dv\\
&+\int_{-\tau}^{0}\int_{-\tau}^{0}2\text{tr}(\dot{\tilde{\lambda}}_1^T(t,\eta)\dot{\tilde{\lambda}}_1(t,\eta))d\eta dv.
\end{aligned} 
\end{equation}
By substituting (\ref{eq:14thm})-(\ref{eq:16thm}) into (\ref{eq:e17xx}), and using the trace operator property $ \text{tr}(AB)=\text{tr}(BA) $ for two square matrices $ A $ and $ B $, (\ref{eq:e17xx}) can be rewritten as
\begin{equation}\label{eq:e18x}
\begin{aligned}
\dot{V}(t)&\leq-{e}^T(t)Qe(t)\\
&+2\tau \text{tr}\big(e(t)x_{hp}^T(t-\tau)x_{hp}(t-\tau)e(t)^TPB_{m}B_{m}^TP\big)\\ 
&+2\tau \text{tr}\big(e(t)u^T(t-\tau)u(t-\tau)e(t)^TPB_{m}B_{m}^TP\big)\\
&+2\tau\int_{-\tau}^{0} \text{tr}\big(e(t)u^T(t+\eta-\tau)u(t+\eta-\tau)e(t)^T\\
&\times PB_{m}B_{m}^TP\big)d\eta.
\end{aligned} 
\end{equation}
Using $ \text{tr}(AB)\leq \text{tr}(A)\text{tr}(B) $ for two positive semidefinite matrices $ A $ and $ B $, and $ \text{tr}(X^TX)=||X||_F^2 $ for a matrix $ X $, an upper bound for (\ref{eq:e18x}) can be derived as 
\begin{equation}
\begin{aligned}
\dot{V}(t)&\leq-{e}^T(t)Qe(t)\\
&+2\tau \text{tr}(e(t)x_{hp}^T(t-\tau)x_{hp}(t-\tau)e(t)^T)\text{tr}(PB_{m}B_{m}^TP)\\ 
&+2\tau \text{tr}\big(e(t)u^T(t-\tau)u(t-\tau)e(t)^T\big)\text{tr}\big(PB_{m}B_{m}^TP\big)\\
&+2\tau\int_{-\tau}^{0} \text{tr}\big(e(t) u^T(t-\tau+\eta)u(t-\tau+\eta)e(t)^T\big)\\
&\times \text{tr}\big(PB_{m}B_{m}^TP\big)d\eta\\
&\leq -\lambda_{min}(Q)||e(t)||^2\\
&+2\tau||x_{hp}(t-\tau)e(t)^T||_F^2||B_{m}^TP||_F^2\\
&+2\tau ||u(t-\tau)e(t)^T||_F^2||B_{m}^TP||_F^2\\
&+2\tau\int_{-\tau}^{0}||u(t+\eta-\tau)e(t)^T||_F^2||B_{m}^TP||_F^2d\eta \notag
\end{aligned}
\end{equation}
\begin{equation}\label{eq:e19x}
\begin{aligned}
&\leq -\lambda_{min}(Q)||e(t)||^2\\
&+2\tau||x_{hp}(t-\tau)||^2||e(t)||^2||B_{m}^TP||_F^2\\
&+2\tau ||u(t-\tau)||^2||e(t)||^2||B_{m}^TP||_F^2\\
&+2\tau\int_{-\tau}^{0}||u(t+\eta-\tau)||^2||e(t)||^2||B_{m}^TP||_F^2d\eta\\
&=||B_{m}^TP||_F^2||e(t)||^2\Big( -\frac{\lambda_{min}(Q)}{||B_{m}^TP||_F^2}\\
&+2\tau\big( ||x_{hp}(t-\tau)||^2+||u(t-\tau)||^2\\
&+\int_{-\tau}^{0}||u(t+\eta-\tau)||^2d\eta \big) \Big).
\end{aligned} 
\end{equation}
Defining $ q\equiv \frac{\lambda_{min}(Q)}{||B_{m}^TP||_F^2} $, the inequality
\begin{equation}\label{eq:e20x}
\begin{aligned}
q&-2\tau\big( ||x_{hp}(t-\tau)||^2+||u(t-\tau)||^2+\\
&+\int_{-\tau}^{0}||u(t+\eta-\tau)||^2d\eta \big)> 0.
\end{aligned} 
\end{equation}
needs to be satisfied for the non-positiveness of $ \dot{V} $.
Assuming that $ x_{hp} $ and $ u $ are bounded in the interval $ [t_0-2\tau,t_0) $, the rest of the proof is divided into the following four steps:

\noindent
\textbf{Step 1} In this step, the negative semi-definiteness of the Lyapunov-Krasovskii functional's (\ref{eq:e12x}) time derivative in the interval $ [t_0-\tau,t_0) $ is shown which leads to the boundedness of the the signals in this interval. In addition, an upper bound for $ u $ in the interval $ [t_0-2\tau,t_0) $ is given.

Suppose that
\begin{equation}
\begin{aligned}
\sup_{\xi\in[t_0-\tau, t_0)}&||x_{hp}(\xi)||^2\leq \gamma_1\\
\sup_{\xi\in[t_0-2\tau, t_0)}&||u(\xi)||^2\leq \gamma_2
\end{aligned} 
\end{equation} 
for some positive $ \gamma_1, \gamma_2$, and a $ \tau_1 $ is given such that
\begin{equation}
\begin{aligned}
2\tau_1(\gamma_1+\gamma_2+\tau_1\gamma_2)<q.
\end{aligned} 
\end{equation} 
Then the following inequality is satisfied:
\begin{equation}
\begin{aligned}
q&-2\tau\big( ||x_{hp}(\xi-\tau)||^2+||u(\xi-\tau)||^2+\\
&+\int_{-\tau}^{0}||u(\xi+\eta-\tau)||^2d\eta \big)> 0,\\ &\forall \xi\in[t_0,t_0+\tau), \forall\tau\in[0, \tau_1].
\end{aligned} 
\end{equation}
It follows that $ V(t) $, defined in (\ref{eq:e12x}), is non-increasing for $ t\in[t_0, t_0+\tau) $. Thus, we have
\begin{equation}
\begin{aligned}
\lambda_{min}(P)||e(\xi)||^2\leq e(\xi)^TPe(\xi)\leq V(t_0),
\end{aligned} 
\end{equation}
which leads to
\begin{equation}
\begin{aligned}
||x_{hp}(\xi)||-||x_m(\xi)||\leq ||e(\xi)||\leq \sqrt{\frac{V(t_0)}{\lambda_{min}(P)}}.
\end{aligned} 
\end{equation}
Then, we have
\begin{equation}
\begin{aligned}
||x_{hp}(\xi)||\leq\sqrt{\frac{V(t_0)}{\lambda_{min}(P)}}+||x_{m}(\xi)||,
\end{aligned} 
\end{equation}
for $ \forall \xi\in[t_0,t_0+\tau) $.
We also have the inequality
\begin{align}\label{eq:x47}
&||{\Phi}(\xi)||^2\leq V(t_0)\implies ||{K_r^*}^{-1}-K_r^{-1}(\xi)||^2\leq V(t_0)\notag\\
&\implies ||K_r^{-1}(\xi)||\leq \sqrt{V(t_0)}+||{K_r^*}^{-1}||.
\end{align}
for $ \forall \xi\in[t_0,t_0+\tau) $.
It is noted that the boundedness of $ \Phi={K_r^*}^{-1}-K_r^{-1} $ does not guarantee the boundedness of $ \tilde{K}_r $. In order to guarantee the boundedness of $ \tilde{K}_r $ independent of the boundedness of $ \Phi $, the projection algorithm \cite{Eug13} is employed as 
\begin{equation}\label{eq:e19xzzz}
\begin{aligned}
\dot{K}_r=\text{Proj}\big(K_r,-K_rB_m^TPe(t)u^T(t-\tau)K_r\big),
\end{aligned} 
\end{equation}
with an upper bound $ K_{max} $, that is $ ||{K}_r||\leq K_{max} $. 
Thus, a lower bound for $ ||K_r^{-1}(\xi)|| $ can be calculated using the following algebraic manipulations: 
 \begin{align}\label{eq:47x}
 &K_r(\xi)K_r^{-1}(\xi)=I\Rightarrow ||K_r(\xi)K_r^{-1}(\xi)||=1\notag \\
 &\Rightarrow 1\leq ||K_r(\xi)||||K_r^{-1}(\xi)||\leq K_{max}||K_r^{-1}(\xi)||\notag \\
 & \Rightarrow \frac{1}{K_{max}}\leq ||K_r^{-1}(\xi)||.
 \end{align}
Defining $ k_1=\sqrt{V(t_0)}+||{K_r^*}^{-1}|| $, and using (\ref{eq:x47}), it is obtained that
\begin{align}\label{eq:47}
\frac{1}{K_{max}}\leq ||K_r^{-1}(\xi)||\leq k_1,\ \ \ \xi\in[t_0,t_0+\tau).
\end{align}
Therefore, $ K_r $ is always bounded and $ K_r^{-1}(\xi) $ is bounded for $ \forall \xi\in[t_0,t_0+\tau) $.

Furthermore, using the definitions of $ \theta_x, \theta_1, \lambda,\lambda_1 $ given in Theorem \ref{thm1}, and the non-increasing Lyapunov functional (\ref{eq:e12x}), it can be concluded that
\begin{align}\label{eq:48}
||\tilde{\theta}_1(\xi)||_F^2\leq V(t_0)&\implies ||\tilde{K}_r^{-1}(\xi)\tilde{\theta}_x(\xi)||_F^2\leq V(t_0),
\end{align}
\begin{align}\label{eq:49}
\int_{-\tau}^{0}||\tilde{\lambda}_1(\xi,\eta)||_F^2 &d\eta \leq V(t_0)\\
&\implies \int_{-\tau}^{0}||K_r^{-1}(\xi)\tilde{\lambda}(\xi,\eta)||_F^2d\eta\leq V(t_0),\notag
\end{align} 
for $ \forall \xi\in[t_0,t_0+\tau) $. Using (\ref{eq:48}) and (\ref{eq:49}), it can be obtained that
\begin{equation}
\begin{aligned}
||\tilde{\theta}_x(\xi)||_F^2&\leq K_{max}^2V(t_0),\\
\int_{-\tau}^{0}||\tilde{\lambda}(\xi,\eta)||_F^2d\eta&\leq K_{max}^2V(t_0).
\end{aligned} 
\end{equation}
for $ \forall \xi\in[t_0,t_0+\tau) $.

To simplify the notation, we define
\begin{equation}
\begin{aligned}
I_0\equiv \text{max}&\Big( \sqrt{\frac{V(t_0)}{\lambda_{min}(P)}}+\sup_{[t_0,t_0+\tau)}||x_{m}(\xi)||\\ &, K_{max}\sqrt{V(t_0)}, K_{max}^2V(t_0) \Big),
\end{aligned} 
\end{equation}
where $ R_{max} $ is the upper bound of the reference input $ r(t) $.

An upper bound on the control signal $ u(t) $ for $ t\in[t_0, t_0+\tau) $ can be derived by using Lemma \ref{lem1} and (\ref{eq:e11xxx}). In particular, setting $ t_i'=t_0 $, $ t_j'=t_0+\tau $, $ c_0^2=V(t_0) $, we obtain that
\begin{equation}\label{eq:e50}
\begin{aligned}
|u(\xi)|\leq 2\Big( \bar{f}+\big( \int_{-\tau}^{0}u^2(t_0+\eta)d\eta \big)^{1/2}I_0 \Big)e^{I_0\tau},
\end{aligned} 
\end{equation}
for $ \forall \xi\in[t_0,t_0+\tau) $, where $ \bar{f} $, which is the upper bound of $ \theta_x(t)x_{hp}(t)+K_r(t)r(t) $, depends only on $ I_0 $. Defining $ g(\gamma_2, I_0, \tau)\equiv 2(\bar{f}+\gamma_2I_0\sqrt{\tau})e^{I_0\tau} $, (\ref{eq:e50}) can be rewritten as
\begin{equation}
\begin{aligned}
|u(\xi)|\leq g(\gamma_2, I_0, \tau),\ \forall \xi\in[t_0,t_0+\tau).
\end{aligned} 
\end{equation}

The rest of the proof is similar to the one given in \cite{YilAnn10}. Below, a summary of the next steps are given.

\noindent
\textbf{Step 2} A delay range $ [0, \tau_2] $ is found that satisfies the condition (\ref{eq:e20x}) over the interval $ [t_0, t_0+2\tau) $ as
\begin{align}\label{eq:step2}
	2\tau_2\left( I_0^2+\left( max\left( \gamma_2,g\left( \gamma_2,I_0,\tau_2 \right)  \right)  \right)^2 (1+\tau_2) \right) < q,
\end{align} 
which leads to $ ||x_{hp}(\xi)||<I_0 $, $ \forall\xi\in[t_0, t_0+2\tau) $, $ \forall \tau\in [0, \bar{\tau}_2] $, $ \bar{\tau}_2=min\{\tau_1, \tau_2\} $.

\noindent
\textbf{Step 3} It is shown in this step that the bound on $ u $ over the interval $ [t_0, t_0+\tau) $ depends only on $ A_{hp} $, $ B_{hp} $, $ T $ and $ \tau $, where $ T $ is a value between $ t_0 $ and $ \tau $. Denoting this upper bound as $ U(I_0) $, we have $ |u(t)|\leq U(I_0) $, $ t\in[t_0,t_0+\tau) $.

\noindent
\textbf{Step 4} Using the calculated upper bound for $ u $ in the previous step, a delay range $ [0, \tau_2] $ is found that satisfies the condition 
\begin{align}\label{eq:step4}
2\tau_3\left( I_0^2+\left( max\left( U(I_0),g\left( U(I_0),I_0,\tau_3 \right)  \right)  \right)^2 (1+\tau_3) \right) < q.
\end{align} 
For $ \tau^*=min(\bar{\tau_2}, \tau_3) $, $ ||x_{hp}(\xi)||\leq I_0 $ and $ |u(\xi)|\leq U(I_0) $ for all $ \xi \in [t_0, t_0+\tau] $, $ \forall \tau\in [0, \tau^*] $.

%

The above four steps show that $ x_{hp}(\xi) $ and $ u(\xi) $ are bounded for $ \forall\xi\in[t_0, t_0+k\tau] $, for $ k=1 $ and $ \tau  \in (0, \tau^*]$. By assuming that $ x_{hp} $ and $ u $ are bounded for a given $ k $, the rest of the proof consists of showing that the boundedness of these variables hold for $ k+1 $. Using this assumption and repeating steps 1-4, which leads to satisfying (\ref{eq:step4}), we conclude that the Lyapunov function is non-increasing and $ ||x_{hp}(\xi)|| \leq I_0 $, and $ |u(\xi)|\leq g(U(I_0),I_0,\tau) $ for $ \xi\in [t_0, t_0+(k+1)\tau] $, $ \tau \leq \tau^* \leq \tau_3 $.
This completes the boundedness proof using induction.
Then, using Barbalat’s Lemma, it can be shown that the error between the human-in-the-loop system output $ x_{hp} $ and the reference model output $ x_m $ converges to zero.
\end{proof}

\section{Experimental Results}\label{sec:experiment}
\subsection{Experimental environment}

In order to test the proposed adaptive human model against data, an experimental setup consisting of a Logitech Extreme 3D Pro joystick and a Toshiba Portege-Z30-B laptop with Intel Core i7 CPU is used (see Fig. \ref{fig:setup}). 



The tracking task is performed by an operator monitoring the compensatory display, which provides information about the error between the target to be followed, and follower, which is the output of the plant (see Fig. \ref{fig:attitudeind}). The operator provides the input $ u_p $ (see Fig. \ref{fig:f1-2}) through the joystick, which is fed to the plant using MATLAB SIMULINK (R2018b). In return, the response of the plant is calculated and shown on the laptop screen in real-time.
\begin{figure}
	\begin{center}
		\includegraphics[width=8.5cm]{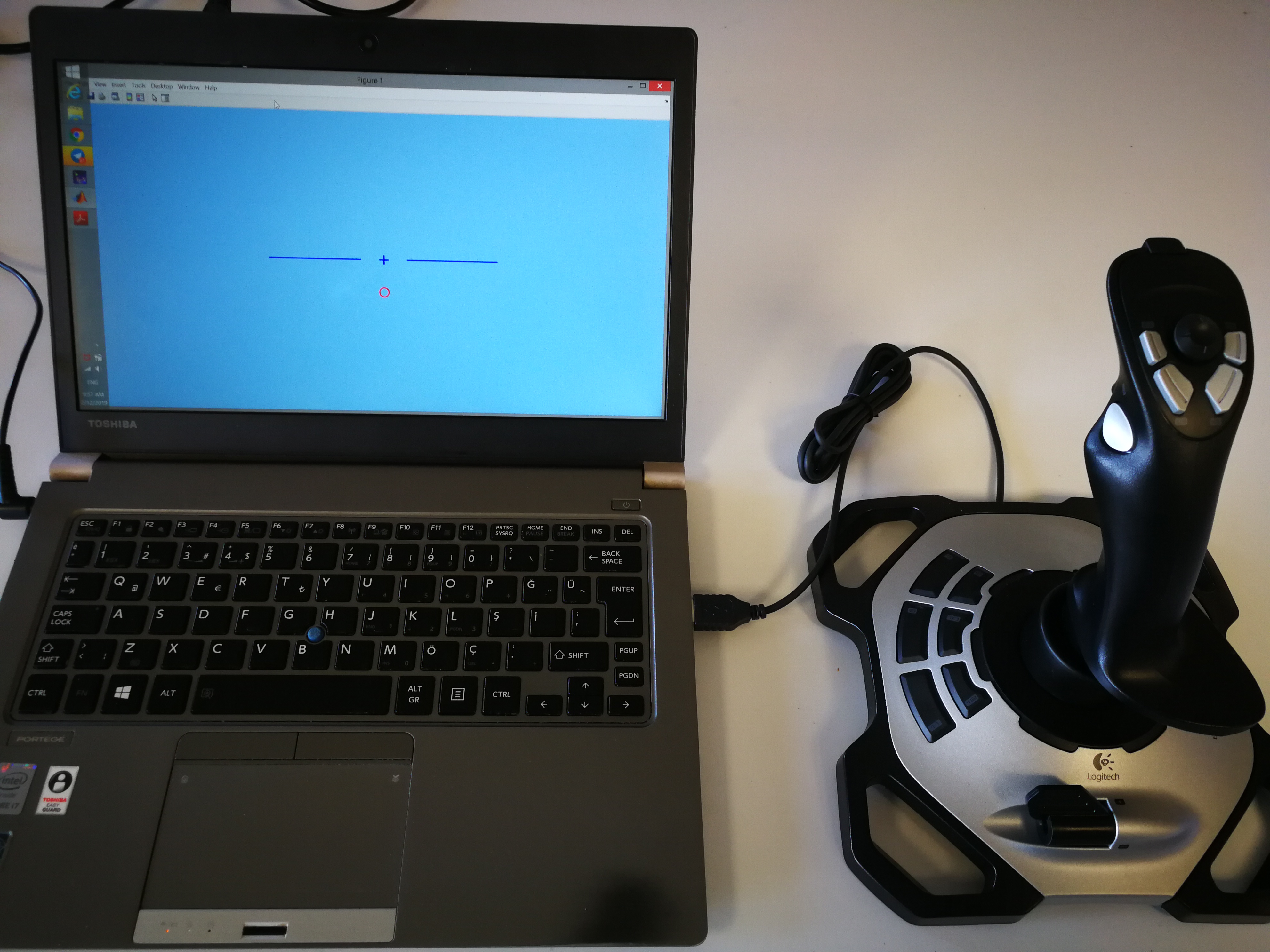}    %
		\caption{Experimental setup.} 
		\label{fig:setup}
	\end{center}
\end{figure}
\begin{figure}
	\begin{center}
		\includegraphics[width=8.5cm]{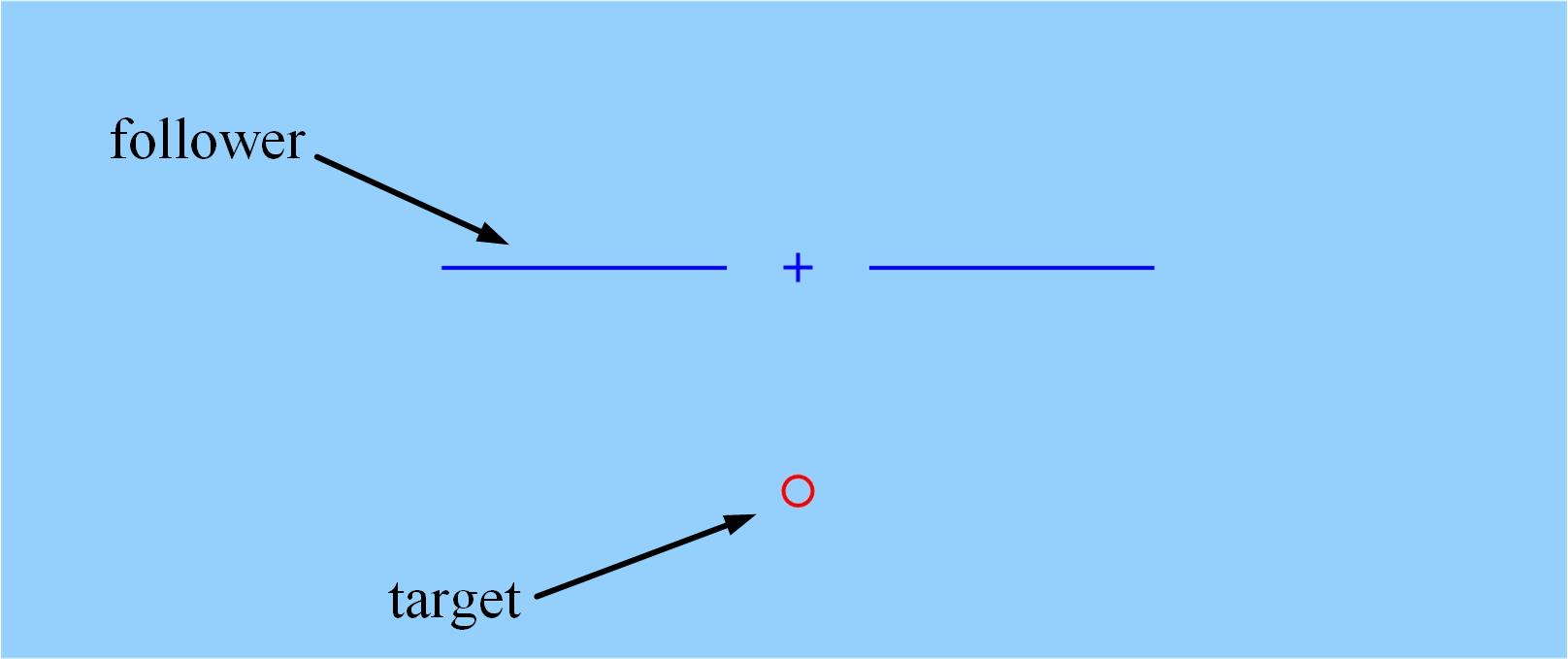}    %
		\caption{Compensatory display.} 
		\label{fig:attitudeind}
	\end{center}
\end{figure}

The reference signal $ r(t) $ is generated as a sum of the sinusoids with frequencies of 0.1, 0.3, 0.5, 0.7, 1, 1.3 and 1.5 rad/s with the same amplitude of 0.2 and without phase shift. 

Three classes of plant models, having zero, first and second order transfer functions are used in the experiments. In this section, we first give a detailed analysis of the first order plant case and then provide a summary of the results of the other cases in tables. The nominal
first order plant used in the experiments is $ Y_p(s)=\frac{4}{s+1} $, which is similar to the one used in \cite{Hess09}. The uncertainty is introduced to the plant model by modifying the gain and the pole location by $ 50\% $ to obtain $ Y_p(s)=\frac{6}{s+0.5} $. 



To form the reference model (\ref{eq:e7x}), two parameters, namely the crossover frequency and the time-delay, need to be determined. The highest frequency component of the reference signal is $ \omega_i=1.5 $ rad/s. Employing Table I for the first order plant $ Y_p $, the crossover frequency is calculated as $ \omega_c=4.5 $ rad/s. The delay is determined by using the mean value of the operators' delay, which is $ \tau=0.3 $ s. Therefore, the closed loop transfer function of the reference model is calculated as
	\begin{equation}\label{eq:e21xq}
	\begin{aligned}
	G_{cl}(s)=\frac{\frac{4.5}{s}e^{-0.3s}}{1+\frac{4.5}{s}e^{-0.3 s}}=\frac{4.5 e^{-0.3 s}}{s+4.5 e^{-0.3 s}}.
	\end{aligned} 
	\end{equation}
	
	Similar to (\ref{eq:e21xqqx}), an approximate transfer function is obtained as
	\begin{equation}\label{eq:e21xqq}
	\begin{aligned}
	\hat{G}_{cl}(s)=\frac{3.881s+24.24}{s^2+0.6834s+24.72}e^{-0.3s}.
	\end{aligned} 
	\end{equation}
	Figure \ref{fig:bode1} shows a comparison between (\ref{eq:e21xq}) and (\ref{eq:e21xqq}), and demonstrates that the approximation works well for almost all frequencies.

\begin{figure}
	\begin{center}
		\includegraphics[width=8.5cm]{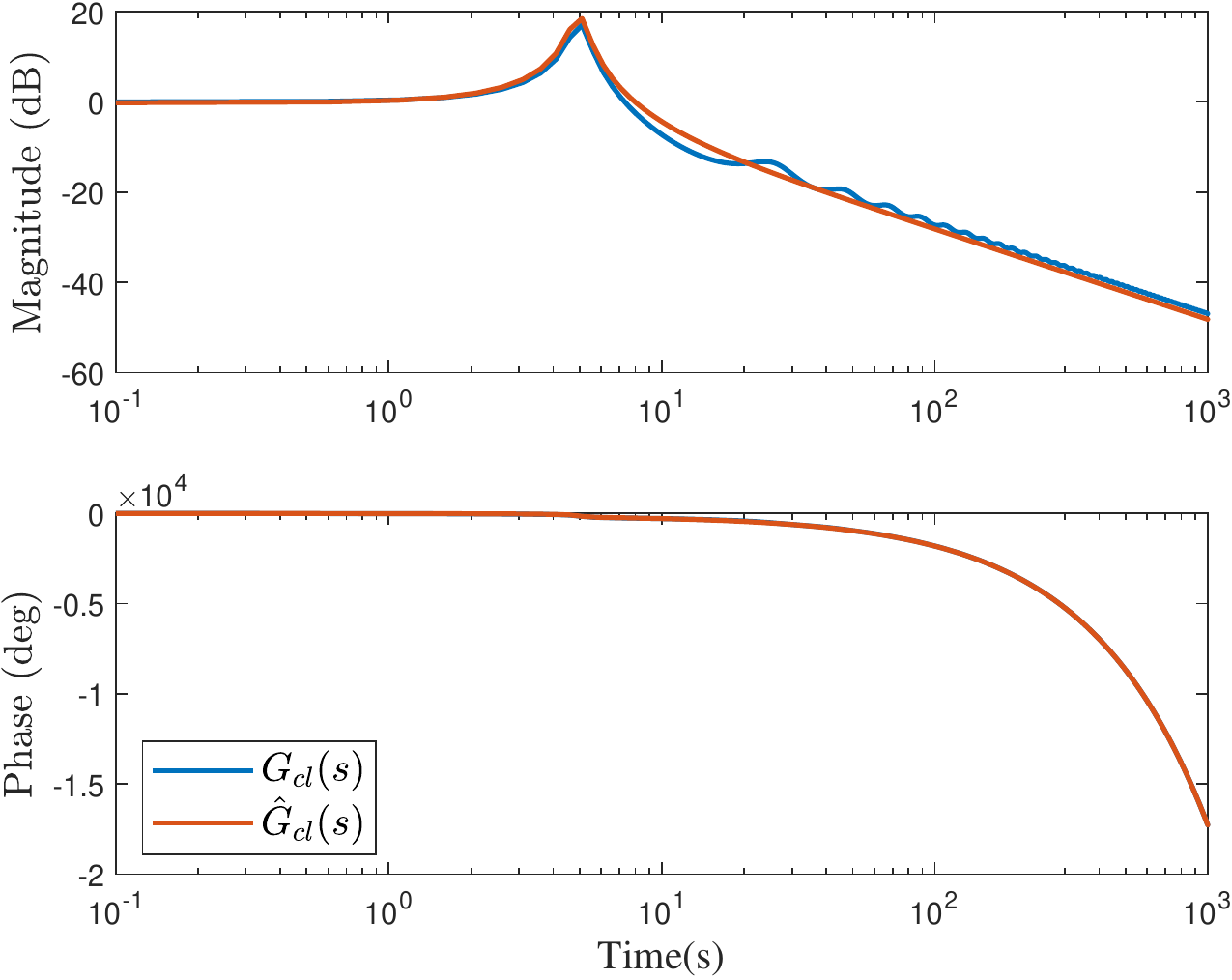}    %
		\caption{Bode plot of the reference model and its approximation.} 
		\label{fig:bode1}
	\end{center}
\end{figure}

The neuromuscular dynamics is taken as $ Y_h(s)=\frac{s+3}{s+2}e^{-0.3s} $, where the time delay $ \tau=0.3 $ is the effective time delay, including human decision making delay and neuromuscular lags.

\begin{remark}\label{rem3}
	In this paper, we assume that the neuromuscular dynamics are given. The procedure for finding the neuromuscular model can be found in \cite{Mag71, Van04}.  
\end{remark}

\subsection{The behavior of the adaptive model}

%

 The error between the plant output and the reference model is illustrated in Figure \ref{fig:error}. The effect of uncertainty injection can be seen at $ t=70 $ s. Figures \ref{fig:theta}, \ref{fig:lambda}, and \ref{fig:Kr} illustrates the adaptive human model parameters. To understand the amount of agreement between these results and the human experimental trials, visual and statistical analyses are provided in the following sections.
\begin{figure}
	\begin{center}
		\includegraphics[width=8.5cm]{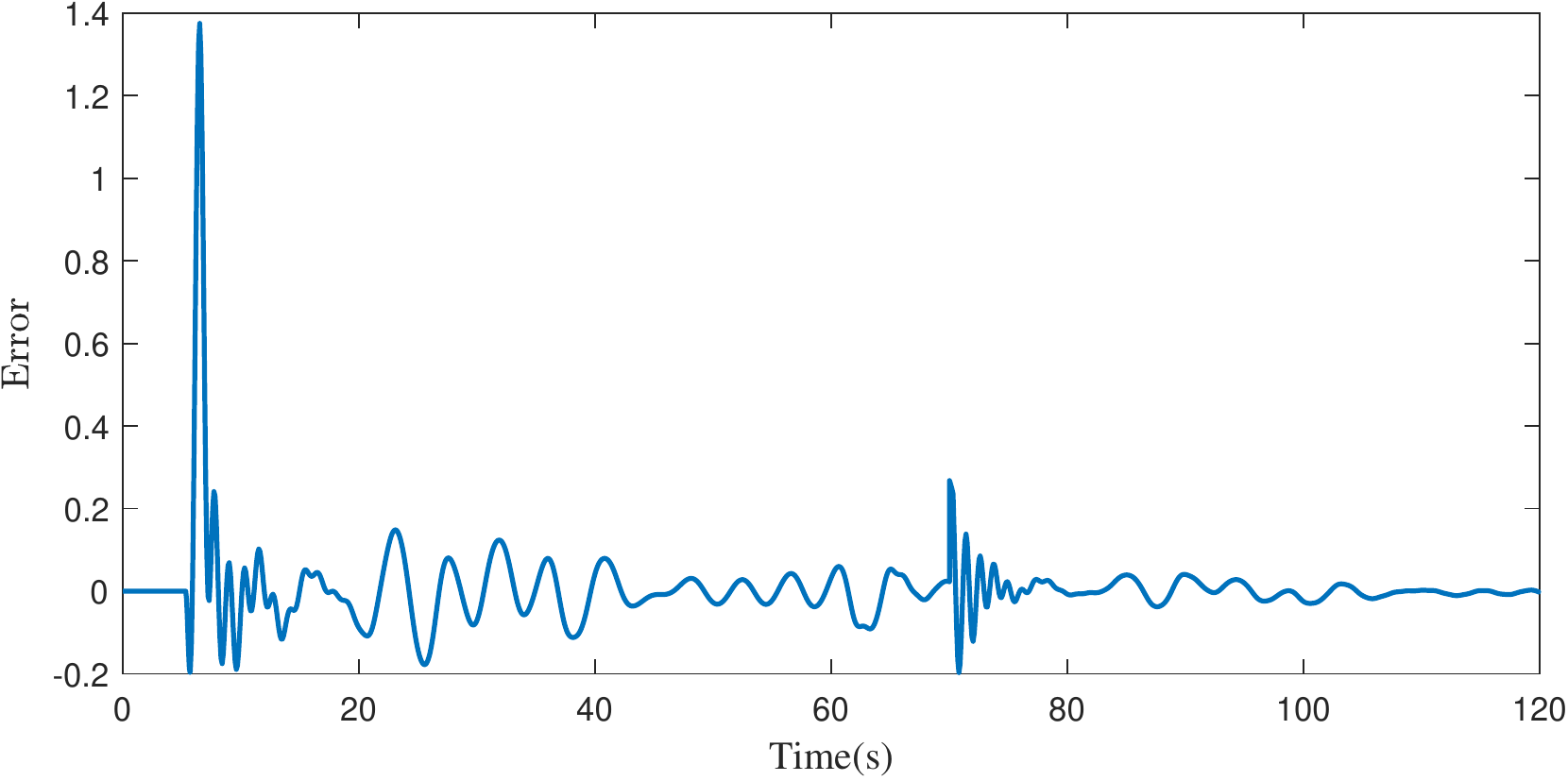}    %
		\caption{Time evolution of the error between the output of the plant controlled by the adaptive model, and the reference model output.} 
		\label{fig:error}
	\end{center}
\end{figure}
\begin{figure}
	\begin{center}
		\includegraphics[width=8.5cm]{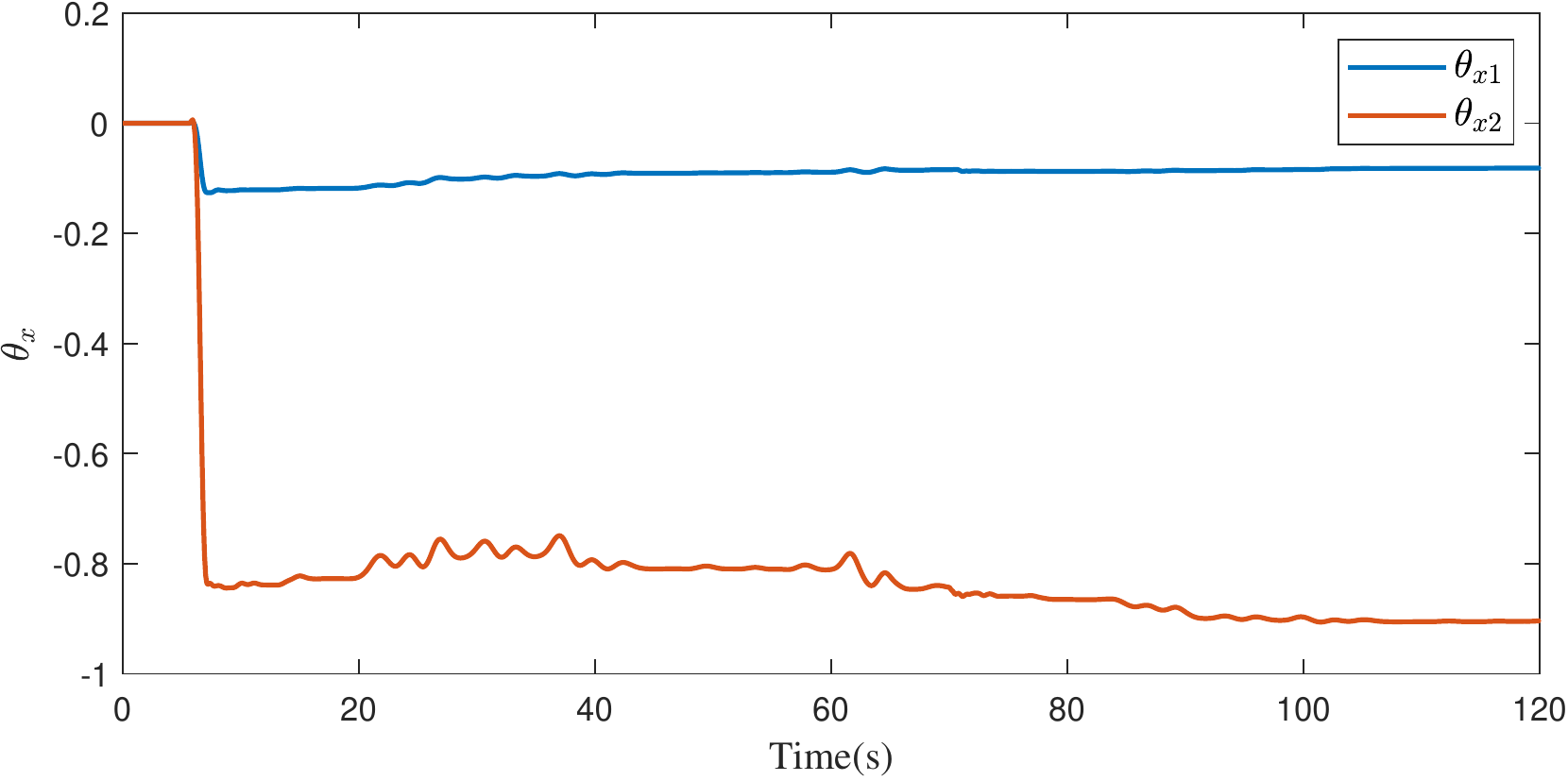}    %
		\caption{Evolution of human adaptive parameters $ \theta_{x1} $ and $ \theta_{x2} $.} 
		\label{fig:theta}
	\end{center}
\end{figure}
\begin{figure}
	\begin{center}
		\includegraphics[width=8.5cm]{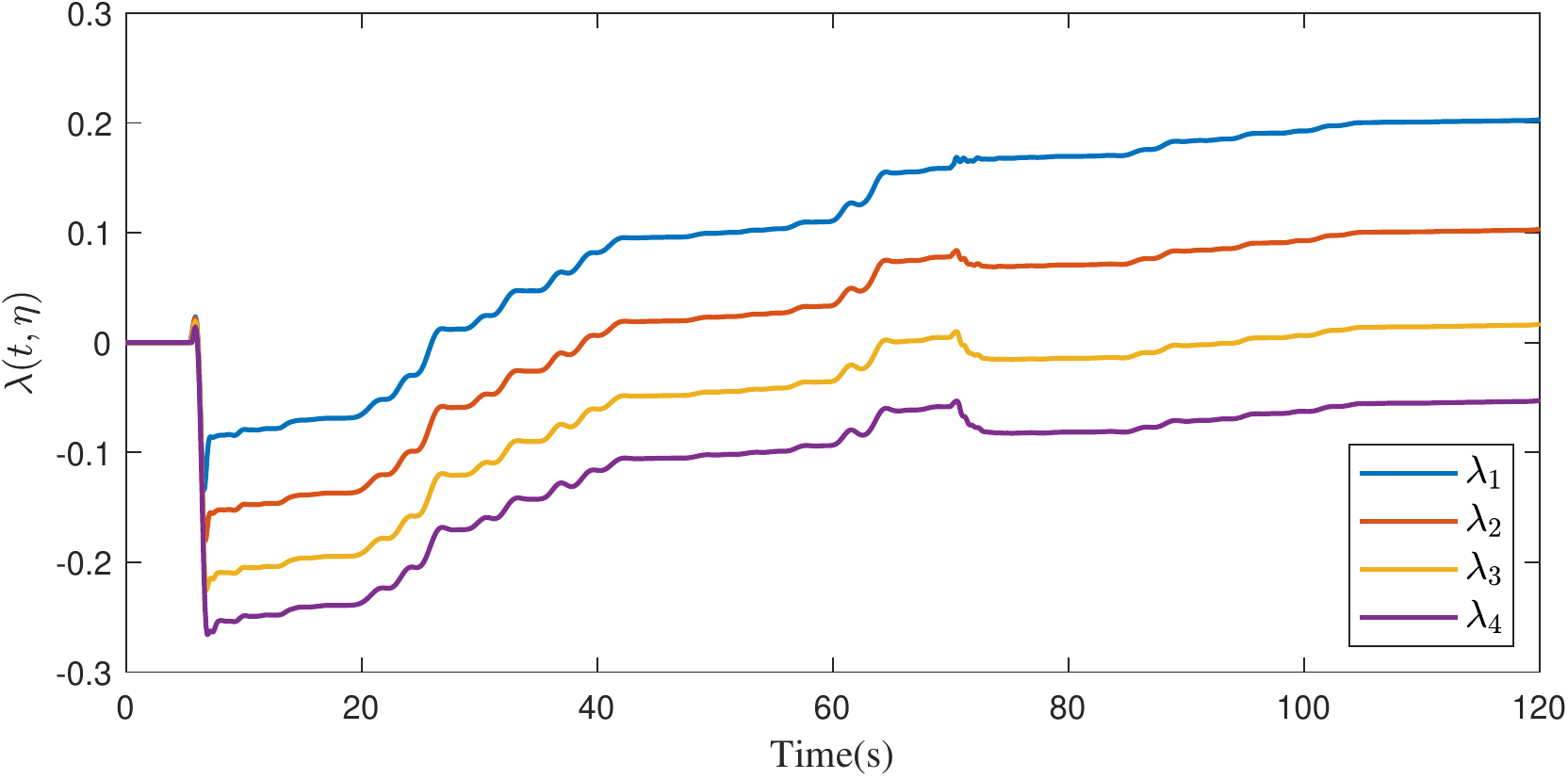}    %
		\caption{Evolution of human adaptive parameters $ \lambda_i,\ i=1, 2, 3 $ and $ 4 $.} 
		\label{fig:lambda}
	\end{center}
\end{figure}
\begin{figure}
	\begin{center}
		\includegraphics[width=8.5cm]{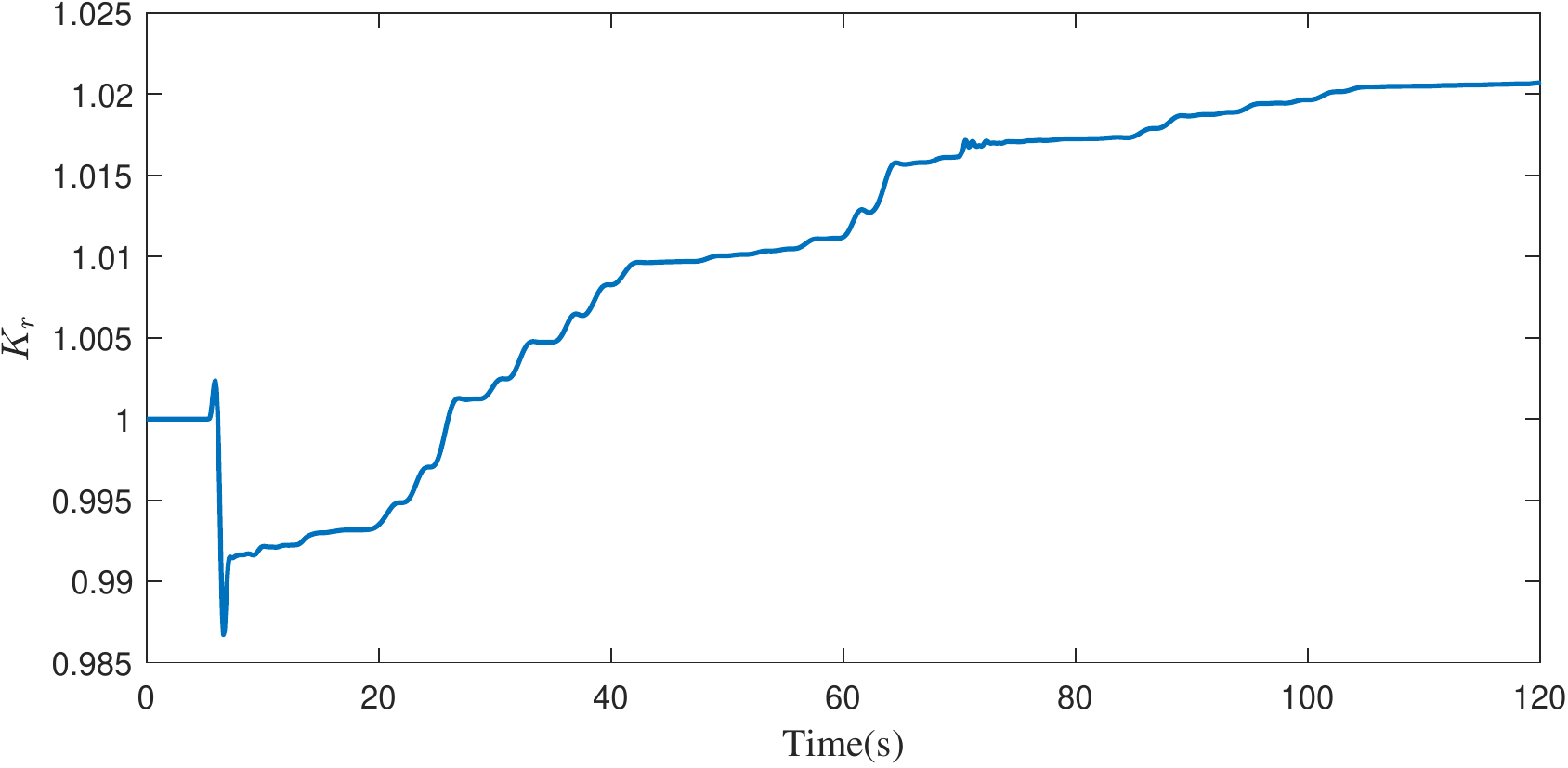}    %
		\caption{Evolution of human adaptive parameter $ K_r $.} 
		\label{fig:Kr}
	\end{center}
\end{figure}

\subsection{Participants and experimental procedure}

Eleven participants (6 women and 5 men) from the graduate and undergraduate student pools of Bilkent University participated the experiment. All of the participants read and signed the ``informed consent to participate" document. This study is approved by Bilkent University Ethics Committee for research with human participants. Before the experiments, to familiarize the participants with the experimental setup, and its environment, consisting of the display and the joystick, each participant was asked to follow a given reference via joystick inputs for the duration of $ 200 $ seconds. To prevent learning during these warm-up runs, the reference input, uncertainty injection times and the uncertainty types were chosen differently from the ones used in the real experimental runs. Specifically, the reference signal for the warm-up runs consisted of the sum of the sinusoids with frequencies of $ 0.1, 0.5, 1 $ and $ 1.5 $ rad/s with the same amplitude of $ 0.2 $ and without phase shift. The plant dynamics at the beginning of the warm-up run was $ \frac{2}{s^2+3s+2} $. At $ t = 45 $ s, the dynamics changed to $ \frac{5}{s+2} $ in a step like manner (suddenly). It changed to $ \frac{3}{s+1} $ at around $ t = 90 $ s using a sigmoid function (gradually), and again changed to a zero order dynamics at $ 150 $ s (suddenly).

\subsection{A visual analysis of the adaptive model}

Let $ f_{p1}(t), f_{p2}(t), ..., f_{pk}(t) $ be the plant outputs when participants $ p1, p2, ..., pk $ are in the loop, respectively. For each $ f_{pi}(t) $, $ t=T_1, T_2, ..., T_N $, where $ T_j, j=1, 2, ..., N $, represents a sampling instant. At each sampling instant $ T_j $, the minimum, the maximum and the mean values of the plant outputs when participants are in the loop can be obtained as
\begin{align}
f_{p_{min}}(T_j)&=\min_{i=1, 2, ..., k}f_{pi}(T_j),\ \ j=1, ..., N,\label{eq:estat4} \\
f_{p_{max}}(T_j)&=\max_{i=1, 2, ..., k}f_{pi}(T_j),\ \ j=1, ..., N, \label{eq:estat5} \\
f_{p_{mean}}(T_j)&=\frac{\sum_{i=1}^{k}f_{pi}(T_j)}{k}, \ \ j=1, ..., N,
\end{align} 
 where $ k=11 $ is the number of participants. Figure \ref{fig:minmaxadapt} shows the evolutions of  $ f_{p_{min}} $ and $ f_{p_{max}} $, together with $ f_{ad}(t)\in \mathbb{R}^N $, which is the plant output when adaptive human model is in the loop, where $ t= T_1, T_2, ..., T_N $. It is seen that the plant output when adaptive human model is in the loop almost always stays between the maximum and the minimum values of the plant output when participants are in the loop. Furthermore, Figure \ref{fig:meanadapt} demonstrates that $ f_{p_{mean}} $ and $ f_{ad} $ evolve reasonably close to each other.
  

\begin{figure}
	\begin{center}
		\includegraphics[width=9.0cm]{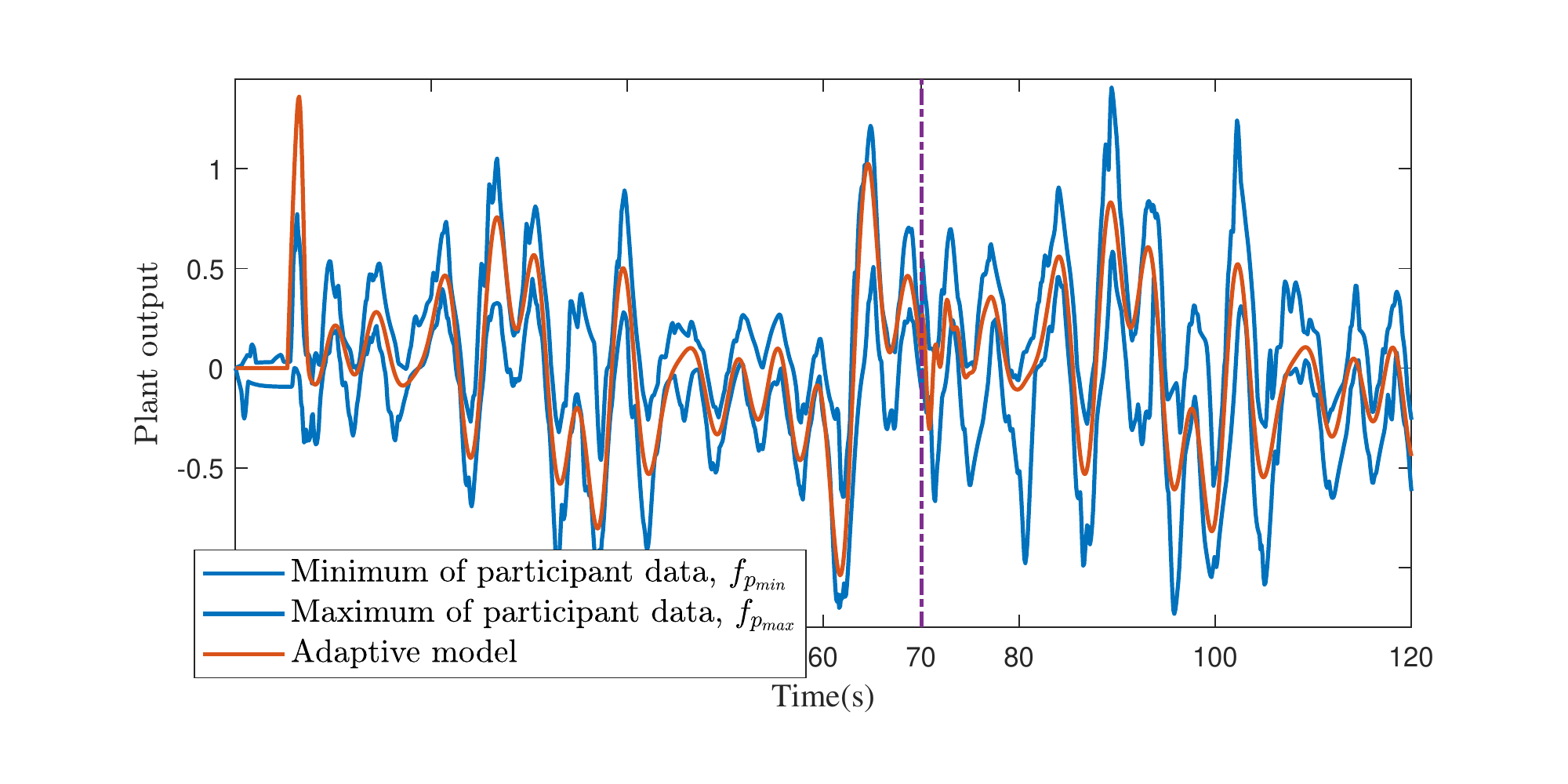}    %
		\caption{Plant output, $ x_{hp} $, when adaptive human model is in the loop vs. minimum and maximum values of plant output when participants are in the loop.} 
		\label{fig:minmaxadapt}
	\end{center}
\end{figure}

\begin{figure}
	\begin{center}
		\includegraphics[width=9.0cm]{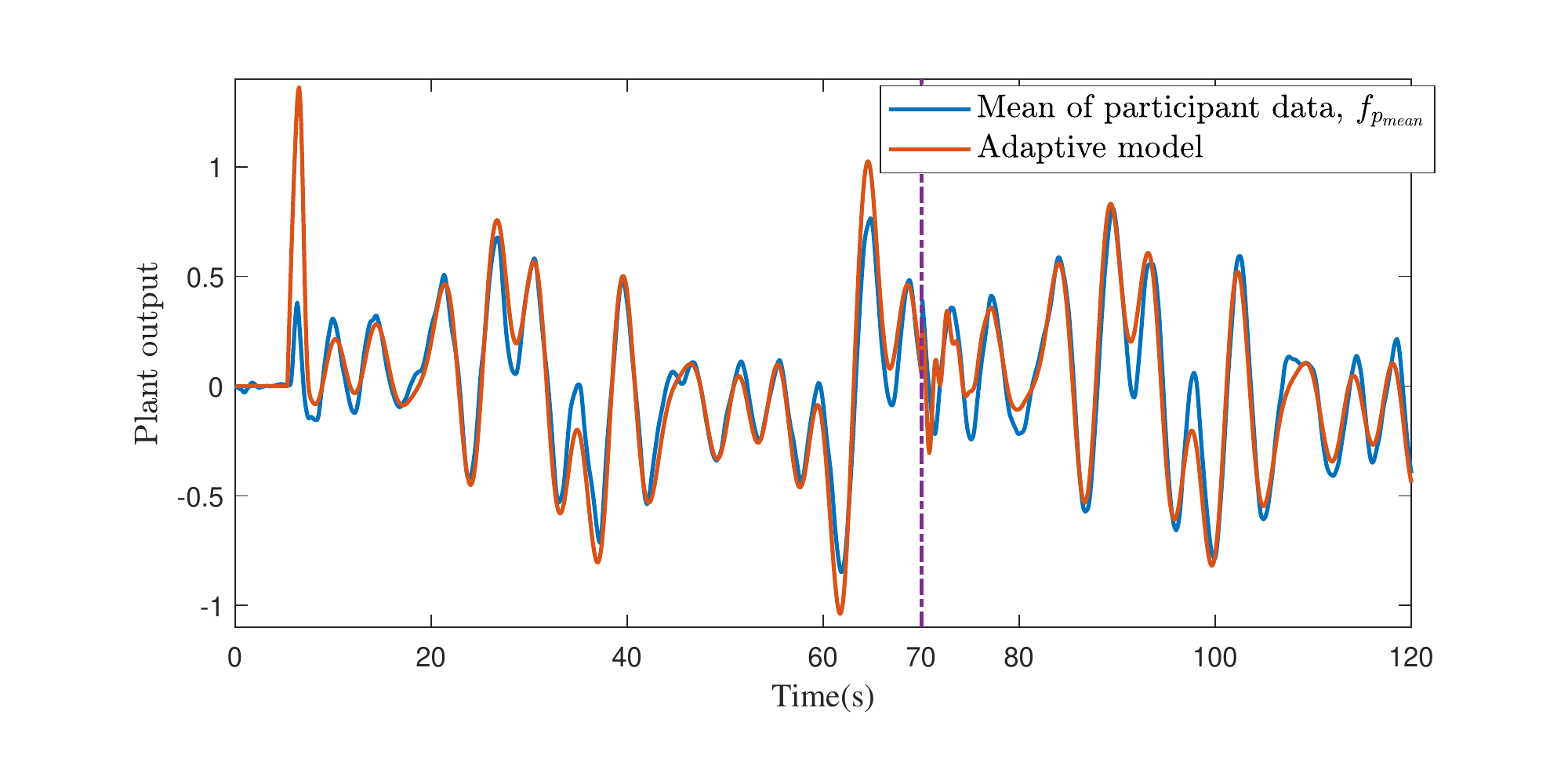}    %
		\caption{Plant output, $ x_{hp} $, when adaptive human model is in the loop vs. mean value of plant output when participants are in the loop.} 
		\label{fig:meanadapt}
	\end{center}
\end{figure}

%

\subsection{Statistical analysis of the adaptive model using confidence intervals}

The difference between the plant output when the $ i^{\text{th}} $ participant is in the loop and when the adaptive human model is in the loop is defined as
\begin{equation}\label{eq:estat1}
\begin{aligned}
d_i\equiv f_{ad}-f_{pi}, \ \ i=1, ..., k,
\end{aligned} 
\end{equation}
where $ d_i=[d_i(T_1), ..., d_i(T_N)]^T\in \mathbb{R}^N, i=1, ..., k $, is called the $ i^{\text{th}} $ difference. The mean and the standard deviation of the $ i $th difference is obtained as
\begin{align}
\bar{d}_i&= \frac{\sum_{j=1}^{N}d_i(T_j)}{N}, \ \ i=1, ..., k,\label{eq:estat2} \\
s_i&=\sqrt{\frac{\sum_{j=1}^{N}(d_i(T_j)-\bar{d}_i)^2}{N-1}}, \ \ i=1, ..., k.\label{eq:estat3}
\end{align}

\begin{figure}
	\begin{center}
		\includegraphics[width=9.0cm]{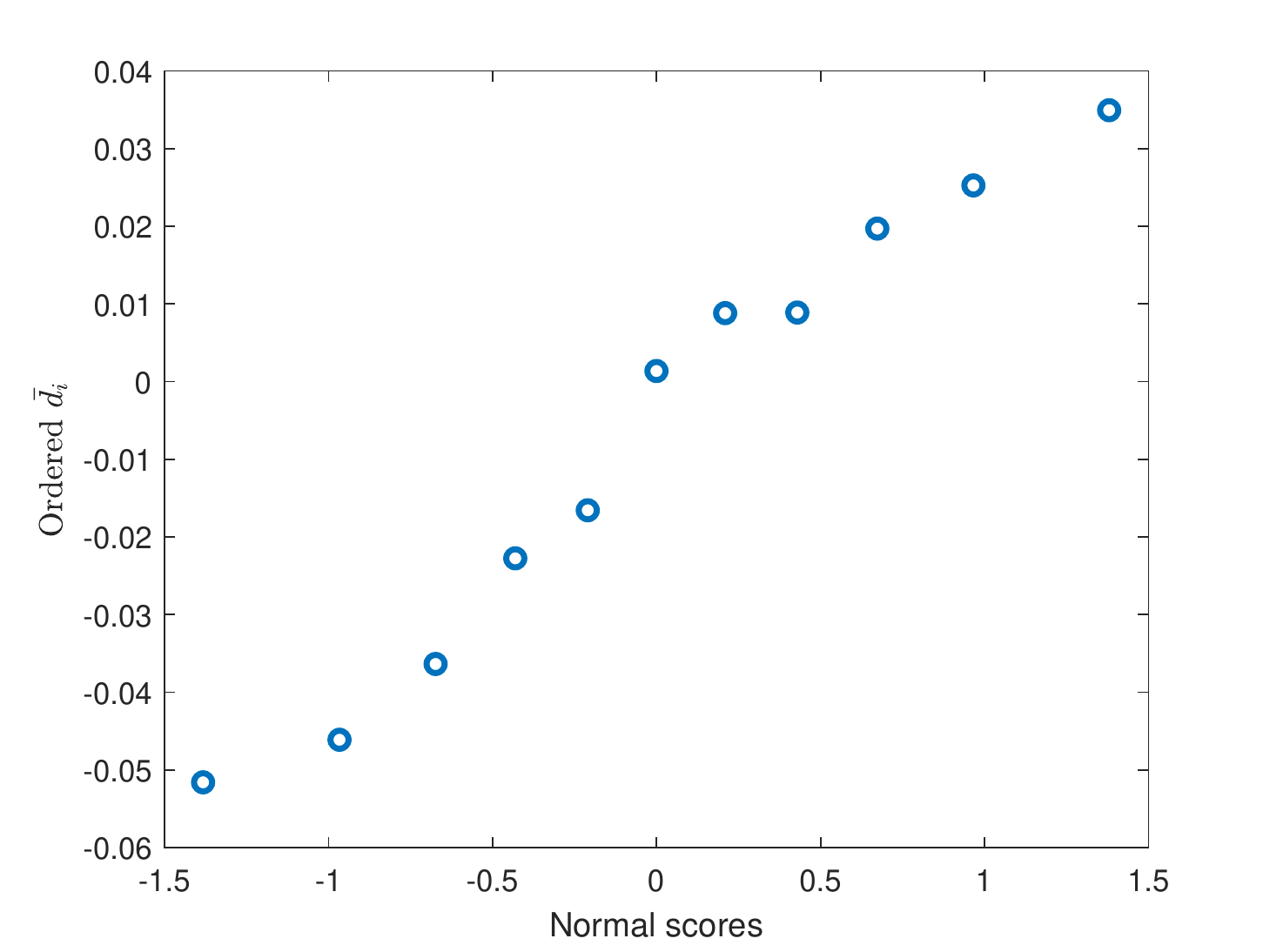}    %
		\caption{Normal-scores plot} 
		\label{fig:normalscores}
	\end{center}
\end{figure}

The normal-scores plot for $ \bar{d}_i $ is given in Figure \ref{fig:normalscores}. The figure does not show any significant deviation from the normal distribution. This shows us that the data do not suggest that the \textit{population} of mean-errors, $ \bar{d}_i $, deviates significantly from normal distribution. The \textit{sample} mean and the \textit{sample} standard deviation of $ \bar{d}_i $'s can be obtained as
\begin{align}
\bar{d}&=\frac{\sum_{i=1}^{k}\bar{d}_i}{k}, \label{eq:estat6}\\
s&=\sqrt{\frac{\sum_{i=1}^{k}(\bar{d}_i-\bar{d})^2}{k-1}}. \label{eq:estat7}
\end{align}
Let $ \mu_0 $ be the mean value of the \textit{population} of mean-errors, given as
\begin{align}\label{eq:estat8}
\mu_0\equiv \frac{\sum_{i=1}^{K}\bar{d}_i}{K},
\end{align}
where $ K $ is the \textit{population} size. Since normal-scores plot, given in Figure \ref{fig:normalscores}, didn't provide any counter evidence, assuming that the distribution of the set of data $ \{\bar{d}_1, ..., \bar{d}_K\} $ is normal with mean $ \mu_0 $, $ \mu_0 $ satisfies the following probability \cite{Joh19}
\begin{align}\label{eq:estat9}
P\left[\bar{d}-t_{\alpha/2}\frac{s}{\sqrt{k}}<\mu_0<\bar{d}+t_{\alpha/2}\frac{s}{\sqrt{k}} \right]=1-\alpha,
\end{align}
where $ \bar{d} $ and $ s $ are obtained from (\ref{eq:estat6}-\ref{eq:estat7}), $ k $ is the number of participants, $ \alpha $ is the significance level, and $ t_{\alpha/2} $ is the upper $ \alpha/2 $ point of the $ t $ distribution with degree of freedom $ k-1 $, which can be obtained from the $ t $-distribution table. Since the number of participants, $ k=11 $, is less than $ 30 $, it is appropriate to use the $ t $-distribution.
Using $ \alpha=0.05 $, obtaining $ t_{\alpha/2} $ from the $ t $-distribution table as $ 2.228 $, and calculating $ \bar{d} $ as $ -0.0068 $ and $ s $ as $ 0.0379 $, it can be concluded using (\ref{eq:estat9}) that we are $ 95\% $ confident that $ \mu_0 $ is in the interval $ (-0.0323, 0.0187) $. This shows that the mean $ \mu_0 $ of the \textit{population}'s mean deviation from the adaptive human model is reasonably close to zero. 



Similarly, the variance, $ \sigma_0^2 $, of the \textit{population}'s mean deviation from the adaptive human model satisfies the following probability 
\cite{Joh19}
\begin{align}\label{eq:estat11}
P\left[\frac{(k-1)s^2}{\chi_{\alpha/2}^2}<\sigma_0^2<\frac{(k-1)s^2}{\chi_{1-\alpha/2}^2} \right]=1-\alpha,
\end{align}
where $ \chi_{\alpha/2}^2 $ is the upper $ {\alpha/2} $ point of the $ \chi^2 $ distribution with degree of freedom $ k-1 $ and can be obtained from the $ \chi^2 $ distribution table. Calculating $ s $ from (\ref{eq:estat7}), using $ \alpha = 0.05 $, and obtaining $ \chi_{\alpha/2}^2 $ and $ \chi_{1-\alpha/2}^2 $ from the $ \chi^2 $ table with $ 10 $ degrees of freedom, it can be concluded using (\ref{eq:estat11}) that we are $ 95\% $ confident that $ \sigma_0 $ is in the interval $ (0.0265,\ 0.0663) $. This shows that the standard deviation $ \sigma_0 $ of the \textit{population}'s mean deviation from the adaptive human model is reasonably small. 


\subsection{Statistical analysis of the adaptive model using hypothesis testing}

In this analysis, we test whether the hypothesis ``the mean value of the \textit{population} mean-errors, or the mean deviations from the adaptive model," is zero. In other words, our null hypothesis, $ H_0 $, is given as
\begin{align}
H_0: \mu_0=0,
\end{align}
where $ \mu_0 $ is defined in (\ref{eq:estat8}). The alternative hypothesis, $ H_1 $, is given as $ H_1: \mu\neq 0 $. Similar to the confidence interval analysis, assuming that $ \mu_0 $ is the mean of a normally distributed set of data $ \{\bar{d}_1, ..., \bar{d}_K\} $ where $ K $ is the \textit{population} size, the hypothesis $ H_0 $ is rejected if,
\begin{align}\label{eq:estat10}
\left| \frac{(\bar{d}-\mu_0)\sqrt{k}}{s}\right| \geq t_{\alpha/2},
\end{align}
where $ \bar{d} $ and $ s $ are obtained from (\ref{eq:estat6}-\ref{eq:estat7}), $ k $ is the number of participants and $ t_{\alpha/2} $ is the upper $ \alpha/2 $ point of the $ t $ distribution with degree of freedom $ k-1 $ \cite{Joh19}. Using the significance level $ \alpha=0.05 $ and degree of freedom $ k-1=10 $, obtaining $ t_{0.025}=2.228 $ from the $ t $-distribution table, calculating $ \bar{d}=-0.0068 $ and $ s=0.038 $ using (\ref{eq:estat6}) and (\ref{eq:estat7}), respectively, and substituting $ \mu_0=0 $ and $ k=11 $, the left hand side of (\ref{eq:estat10}) can be calculated as $ 0.5935 $, which is less than $ t_{\alpha/2} $. Therefore, we cannot reject $ H_0 $. We retain $ H_0 $ and conclude that $ H_1 $ fails to be proved.
 
Since we are retaining the null hypothesis, we want to minimize the probability $ \beta $ of incorrectly retaining the null hypothesis. This means that we want our test's power, $ 1-\beta $, to be large, such as $ 0.95 $. What is the minimum required deviation of the \textit{population} mean from $ 0 $, represented as $ \mu_1 $, that would make our test to incorrectly retain the null hypothesis with $ 0.05 $ probability, i.e. $ \beta=0.05 $? To calculate this, we first write the rejection region, $ R $, using (\ref{eq:estat10}) as
\begin{align}
R:\ \left| \frac{(\bar{d}-\mu_0)\sqrt{k}}{s}\right| \geq t_{\alpha/2}\ 
 \implies R:\ |\bar{d}|\geq 0.0255.
\end{align}
Defining $ T=\frac{(\bar{d}-\mu_1)\sqrt{k}}{s} $, for $ \beta=0.05 $, we need
\begin{align}\label{eq:estat12}
&P\left[ \frac{(-0.0255-\mu_1)\sqrt{k}}{s}<T<\frac{(0.0255-\mu_1)\sqrt{k}}{s} \right]  \notag \\
&=\frac{\beta}{2}=0.025.
\end{align}
Using the $ t $-table, it can be found that the minimum $ |\mu_1| $ that satisfies (\ref{eq:estat12}) is $ 0.051 $. This means that our test can detect an $ 0.051 $ deviation from the mean value of the mean-error between the adaptive human model and the participant data when the probability of the test to incorrectly conclude that the model and the data are compatible ($ \mu_0=0 $) is only $ 5\% $. 

\begin{table}\label{table2}
	\caption{Sudden uncertainty} 
	\centering
	\begin{tabular}{ | c | c | c | c | }
		\hline 
		& 0 order & 1st order & 2nd order \\
		\hline 
		TF before $ 70 $ s & $ 4 $ & $ \frac{4}{s+1} $ & $ \frac{4}{(s+1)(s+5)} $ \\
		\hline
		TF after $ 70 $ s & $ 6 $ & $ \frac{6}{s+0.5} $ & $ \frac{6}{(s+0.5)(s+2.5)} $ \\
		\hline
		$ \bar{d} $ & $ 0.0085 $ & $ -0.0068 $ & $ 0.0011 $ \\
		\hline 
		$ s $ & $ 0.0339 $ & $ 0.0379 $ & $ 0.0252 $ \\
		\hline
		Mean conf. int. & $ (-0.014, 0.03) $ & $ (-0.032, 0.02) $ & $ (-0.016, 0.018) $ \\
		\hline
		St.d. conf. int. & $ (0.024, 0.06) $ &  $ (0.026, 0.066) $ & $ (0.018, 0.044) $ \\
		\hline
		Hypothesis test & $ H_0 $ is retained & $ H_0 $ is retained  & $ H_0 $ is retained \\
		\hline
	\end{tabular}
\end{table} 
\begin{table}\label{table3}
	\caption{Gradual uncertainty} 
	\centering
	\begin{tabular}{ | c | c | c | c | }
		\hline 
		& 0 order & 1st order & 2nd order \\
		\hline 
		TF before $ 70 $ s & $ 4 $ & $ \frac{4}{s+1} $ & $ \frac{4}{(s+1)(s+5)} $ \\
		\hline
		TF after $ 70 $ s & $ 6 $ & $ \frac{6}{s+0.5} $ & $ \frac{6}{(s+0.5)(s+2.5)} $ \\
		\hline
		$ \bar{d} $ & $ 0.0154 $ & $ 0.0026 $ & $ 0.004 $ \\
		\hline 
		$ s $ & $ 0.038 $ & $ 0.034 $ & $ 0.03 $ \\
		\hline
		Mean conf. int. & $ (-0.01, 0.04) $ & $ (-0.02, 0.025) $ & $ (-0.016, 0.023) $ \\
		\hline
		St.d. conf. int. & $ (0.026, 0.067) $ & $ (0.0235, 0.06) $ & $ (0.02, 0.05) $ \\
		\hline
		Hypothesis test & $ H_0 $ is retained & $ H_0 $ is retained  & $ H_0 $ is retained \\
		\hline
	\end{tabular}
\end{table} 


Analyses of the experimental results where a first order plant dynamics is used with a sudden uncertainty injection is provided above. All of the results, including the ones for the other cases, where plants with different orders and sudden/gradual uncertainty injections, are summarized in Tables II and III.
The data collected from the participants can be reached at \href{http://www.syslab.bilkent.edu.tr/research}{http://www.syslab.bilkent.edu.tr/research}.
%
%
%
%

\section{SUMMARY}\label{sec:conclusion}

In this paper, an adaptive human pilot model based on model reference adaptive control principles is proposed. This model mimics the pilot decision making process by making sure that the overall closed loop system follows the crossover model in the presence of plant uncertainties. The stability of the system is shown using the Lyapunov-Krasovskii stability criteria. Furthermore, experiments with human operators are conducted to validate the model. Detailed visual and statistical analyses of the experimental results show that the adaptive model creates similar system responses as the human operators. 

\bibliographystyle{ieeetr}
\bibliography{References}

\begin{thebibliography}{10}

\bibitem{Noth16}
W.~D. Nothwang, M.~J. McCourt, R.~M. Robinson, S.~A. Burden, and J.~W. Curtis,
  ``The human should be part of the control loop?,'' in {\em 2016 Resilience
  Week (RWS)}, pp.~214--220, IEEE, 2016.

\bibitem{Kor15}
M.~K{\"o}rber, W.~Schneider, and M.~Zimmermann, ``Vigilance, boredom proneness
  and detection time of a malfunction in partially automated driving,'' in {\em
  2015 International Conference on Collaboration Technologies and Systems
  (CTS)}, pp.~70--76, IEEE, 2015.

\bibitem{YilKol10}
Y.~Yildiz and I.~V. Kolmanovsky, ``A control allocation technique to recover
  from pilot-induced oscillations (capio) due to actuator rate limiting,'' in
  {\em Proceedings of the 2010 American Control Conference}, pp.~516--523,
  IEEE, 2010.

\bibitem{Yil11a}
Y.~Yildiz and I.~Kolmanovsky, ``Stability properties and cross-coupling
  performance of the control allocation scheme {CAPIO},'' {\em Journal of
  Guidance, Control, and Dynamics}, vol.~34, no.~4, pp.~1190--1196, 2011.

\bibitem{AcoYil14}
D.~M. Acosta, Y.~Yildiz, R.~W. Craun, S.~D. Beard, M.~W. Leonard, G.~H. Hardy,
  and M.~Weinstein, ``Piloted evaluation of a control allocation technique to
  recover from pilot-induced oscillations,'' {\em Journal of Aircraft},
  vol.~52, no.~1, pp.~130--140, 2014.

\bibitem{TohYil18}
S.~S. Tohidi, Y.~Yildiz, and I.~Kolmanovsky, ``Pilot induced oscillation
  mitigation for unmanned aircraft systems: An adaptive control allocation
  approach,'' in {\em 2018 IEEE Conference on Control Technology and
  Applications (CCTA)}, pp.~343--348, IEEE, 2018.

\bibitem{Li14}
W.~Li, D.~Sadigh, S.~S. Sastry, and S.~A. Seshia, ``Synthesis for
  human-in-the-loop control systems,'' in {\em International Conference on
  Tools and Algorithms for the Construction and Analysis of Systems},
  pp.~470--484, Springer, 2014.

\bibitem{Hul13}
T.~Hulin, A.~Albu-Sch{\"a}ffer, and G.~Hirzinger, ``Passivity and stability
  boundaries for haptic systems with time delay,'' {\em IEEE Transactions on
  Control Systems Technology}, vol.~22, no.~4, pp.~1297--1309, 2013.

\bibitem{Yuc18}
T.~Yucelen, Y.~Yildiz, R.~Sipahi, E.~Yousefi, and N.~Nguyen, ``Stability limit
  of human-in-the-loop model reference adaptive control architectures,'' {\em
  International Journal of Control}, vol.~91, no.~10, pp.~2314--2331, 2018.

\bibitem{Emre19}
E.~Eraslan, Y.~Yildiz, and A.~M. Annaswamy, ``Shared control between pilots and
  autopilots: Illustration of a cyber-physical human system,'' {\em IEEE
  Control Systems magazine}, accepted for publication, 2020.

\bibitem{Zha19}
J.~Zhao and T.~Iwasaki, ``Cpg control for harmonic motion of assistive robot
  with human motor control identification,'' {\em IEEE Transactions on Control
  Systems Technology}, 2019.

\bibitem{Mcr57}
D.~T. McRuer and E.~S. Krendel, ``Dynamic response of human operators,'' tech.
  rep., WADC-TR-56-524, 1957.

\bibitem{Mcr74}
D.~T. McRuer and E.~S. Krendel, ``Mathematical models of human pilot
  behavior,'' tech. rep., AGARD-AG-188, 1974.

\bibitem{Neal71}
T.~P. Neal and R.~E. Smith, ``A flying qualities criterion for the design of
  fighter flight-control systems,'' {\em Journal of Aircraft}, vol.~8, no.~10,
  pp.~803--809, 1971.

\bibitem{Mcr63}
D.~McRuer and D.~Graham, ``Pilot-vehicle control system analysis,'' in {\em
  Guidance and Control Conference}, p.~310, 1963.

\bibitem{Bee08}
G.~Beerens, H.~Damveld, M.~Mulder, and M.~Van~Paassen, ``An investigation into
  crossover regression and pilot parameter adjustment,'' in {\em AIAA Modeling
  and Simulation Technologies Conference and Exhibit}, p.~7112, 2008.

\bibitem{Mar17}
M.~Mart{\'\i}nez-Garc{\'\i}a, T.~Gordon, and L.~Shu, ``Extended crossover model
  for human-control of fractional order plants,'' {\em IEEE Access}, vol.~5,
  pp.~27622--27635, 2017.

\bibitem{War16}
R.~B. Warrier and S.~Devasia, ``Inferring intent for novice human-in-the-loop
  iterative learning control,'' {\em IEEE Transactions on Control Systems
  Technology}, vol.~25, no.~5, pp.~1698--1710, 2016.

\bibitem{Gil09}
J.~J. Gil, A.~Rubio, and J.~Savall, ``Decreasing the apparent inertia of an
  impedance haptic device by using force feedforward,'' {\em IEEE Transactions
  on Control Systems Technology}, vol.~17, no.~4, pp.~833--838, 2009.

\bibitem{Wier69}
R.~D. Wierenga, ``An evaluation of a pilot model based on kalman filtering and
  optimal control,'' {\em IEEE Transactions on Man-Machine Systems}, vol.~10,
  no.~4, pp.~108--117, 1969.

\bibitem{Kle70}
D.~L. Kleinman, S.~Baron, and W.~Levison, ``An optimal control model of human
  response part i: Theory and validation,'' {\em Automatica}, vol.~6, no.~3,
  pp.~357--369, 1970.

\bibitem{Na12}
X.~Na and D.~J. Cole, ``Modelling and identification of a driver controlling a
  vehicle equipped with active steering where the driver and vehicle have
  different target paths,'' in {\em Proceedings of the 11th International
  Symposium on Advanced Vehicle Control (AVEC2012)}, 2012.

\bibitem{Lon13}
M.~M. Lone and A.~K. Cooke, ``Pilot-model-in-the-loop simulation environment to
  study large aircraft dynamics,'' {\em Proceedings of the Institution of
  Mechanical Engineers, Part G: Journal of aerospace engineering}, vol.~227,
  no.~3, pp.~555--568, 2013.

\bibitem{Hu19}
W.-L. Hu, C.~Rivetta, E.~MacDonald, and D.~P. Chassin, ``Optimal operator
  training reference models for human-in-the-loop systems,'' in {\em
  Proceedings of the 52nd Hawaii International Conference on System Sciences},
  2019.

\bibitem{Hess09}
R.~A. Hess, ``Modeling pilot control behavior with sudden changes in vehicle
  dynamics,'' {\em Journal of Aircraft}, vol.~46, no.~5, pp.~1584--1592, 2009.

\bibitem{Hess15}
R.~A. Hess, ``Modeling human pilot adaptation to flight control anomalies and
  changing task demands,'' {\em J. Guid. Control Dyn.}, vol.~38, no.~6,
  pp.~655--666, 2015.

\bibitem{TohYil19Human}
S.~Tohidi and Y.~Yildiz, ``Adaptive human pilot model for uncertain systems,''
  in {\em 2019 18th European Control Conference (ECC)}, pp.~2938--2943, IEEE,
  2019.

\bibitem{Xu19}
S.~Xu, W.~Tan, and X.~Qu, ``Modeling human pilot behavior for aircraft with a
  smart inceptor,'' {\em IEEE Transactions on Human-Machine Systems}, vol.~49,
  no.~6, pp.~661--671, 2019.

\bibitem{Lon14}
M.~Lone and A.~Cooke, ``Review of pilot models used in aircraft flight
  dynamics,'' {\em Aerospace Science and Technology}, vol.~34, pp.~55--74,
  2014.

\bibitem{Xu17}
S.~Xu, W.~Tan, A.~V. Efremov, L.~Sun, and X.~Qu, ``Review of control models for
  human pilot behavior,'' {\em Annual Reviews in Control}, vol.~44,
  pp.~274--291, 2017.

\bibitem{Zaal11}
P.~Zaal and B.~Sweet, ``Estimation of time-varying pilot model parameters,'' in
  {\em AIAA Modeling and Simulation Technologies Conference}, p.~6474, 2011.

\bibitem{Duar17}
R.~Duarte, D.~Pool, M.~van Paassen, and M.~Mulder, ``Experimental scheduling
  functions for global lpv human controller modeling,'' {\em
  IFAC-PapersOnLine}, vol.~50, no.~1, pp.~15853--15858, 2017.

\bibitem{Zha15}
X.~Zhang, T.~Seigler, and J.~B. Hoagg, ``Modeling the control strategies that
  humans use to control nonminimum-phase systems,'' in {\em 2015 American
  Control Conference (ACC)}, pp.~471--476, IEEE, 2015.

\bibitem{Van15}
K.~van~der El, D.~M. Pool, H.~J. Damveld, M.~R.~M. van Paassen, and M.~Mulder,
  ``An empirical human controller model for preview tracking tasks,'' {\em IEEE
  transactions on cybernetics}, vol.~46, no.~11, pp.~2609--2621, 2015.

\bibitem{Mcr67}
D.~T. McRuer and H.~R. Jex, ``A review of quasi-linear pilot models,'' {\em
  IEEE transactions on human factors in electronics}, no.~3, pp.~231--249,
  1967.

\bibitem{Mag71}
R.~E. Magdaleno, {\em Experimental validation and analytical elaboration for
  models of the pilot's neuromuscular sub-system in tracking tasks}, vol.~1757.
\newblock NASA, 1971.

\bibitem{Van04}
M.~Van~Paasen, J.~Van Der~Vaart, and J.~Mulder, ``Model of the neuromuscular
  dynamics of the human pilot's arm,'' {\em Journal of aircraft}, vol.~41,
  no.~6, pp.~1482--1490, 2004.

\bibitem{NarAnn12}
K.~S. Narendra and A.~M. Annaswamy, {\em Stable adaptive systems}.
\newblock Courier Corporation, 2012.

\bibitem{YilAnn10}
Y.~Yildiz, A.~Annaswamy, I.~V. Kolmanovsky, and D.~Yanakiev, ``Adaptive
  posicast controller for time-delay systems with relative degree {$n^*\leq
  2$},'' {\em Automatica}, vol.~46, no.~2, pp.~279--289, 2010.

\bibitem{Eug13}
L.~Eugene, W.~Kevin, and D.~Howe, ``Robust and adaptive control with aerospace
  applications,'' 2013.

\bibitem{Joh19}
R.~A. Johnson and G.~K. Bhattacharyya, {\em Statistics: principles and
  methods}.
\newblock John Wiley \& Sons, 2019.

\end{thebibliography}

\end{document}